\crefname{algocf}{Alg.}{Algs.}
\Crefname{algocf}{Algorithm}{Algorithms}
\crefname{section}{Sect.}{Sects.}
\Crefname{section}{Section}{Sections}
\crefname{definition}{Def.}{Defs.}
\Crefname{definition}{Definition}{Definitions}
\crefname{table}{Tab.}{Tabs.}
\Crefname{table}{Table}{Tables}
\tiny\color{gray}, 
\crefname{lstlisting}{Listing}{Listings}
\newcommand{\DclrSty}[1] {\textsc{#1}}
\def\Declare {\DclrSty{Declare}}
\def\DecSpec {\ensuremath{\mathcal{D}\hspace{-0.5ex}\mathcal{S}}}
\def\MinDecSpec {\ensuremath{\mathcal{S}_{\textrm{ined}}}}
\def\Subsum {\ensuremath{\sqsubseteq}}
\newcommand{\ActvOf}[1] {\ensuremath{{\begin{tikzpicture}[x=0.2ex,y=0.2ex,baseline={(current bounding box.center)}]\draw (0,0) rectangle (3,3);\end{tikzpicture}}#1}}
\newcommand{\TrgtOf}[1] {\ensuremath{#1{\begin{tikzpicture}[x=0.2ex,y=0.2ex,baseline={(current bounding box.center)}]\draw[-{Stealth[scale=0.75]}] (0,0) -- (1,0);\end{tikzpicture}}}}
\def\Cns {\ensuremath{\DclrSty{k}}}
\newacronym{rcns}{\textsc{rCon}}{Reactive Constraint}
\def\CnsSet {\ensuremath{K}}
\def\DecTemp {\ensuremath{\DclrSty{k}}}
\def\DecRepertoire {\ensuremath{\DclrSty{Rep}}}
\def\CnsInstRelation {\ensuremath{\Gamma}}
\newcommand{\CnsInterpFun} {\ensuremath{\mathscr{I}}}
\def\CnsEvalFunctor {\ensuremath{\eta}}
\def\Supp {\ensuremath{\mathrm{supp}}}
\def\Conf {\ensuremath{\mathrm{conf}}}
\def\IntF {\ensuremath{\mathrm{intf}}}
\def\CnsQu {\ensuremath{\Cns^{\text{q}}}}
\NewDocumentCommand{\ActvC}{o}{%
  \ensuremath{\varphi_\bullet%
    \IfValueT{#1}{(#1)}%
  }%
}
\NewDocumentCommand{\CorrC}{o}{%
  \ensuremath{\varphi_\rightleftharpoons%
    \IfValueT{#1}{(#1)}%
  }%
}
\NewDocumentCommand{\TrgtC}{o}{%
  \ensuremath{\varphi_\Rightarrow%
    \IfValueT{#1}{(#1)}%
  }%
}
\NewDocumentCommand{\DeQuery}{m o}{%
\ensuremath{%
	\left\lbrack%
	#1 | \,%
	\IfValueTF{#2}{%
		#2%
	}{%
		(\CnsQu,\SuppMin,\ConfMin)
	}%
	\right\rbrack
}%
}
\def\AtLeTxt {AtLeastOne}
\def\AtMoTxt {AtMostOne}
\def\EndTxt {End}
\def\RespTxt {Response}
\def\ChnRespTxt {Chain\RespTxt}
\def\PrecTxt {Precedence}
\def\PrecTxtShort {Prec.}
\def\AltPrecTxt {Alternate\PrecTxt}
\def\AltPrecTxtShort {Alt.\PrecTxtShort}
\def\AtMoTmp {\ensuremath{\DclrSty{\AtMoTxt}}}
\def\EndTmp {\ensuremath{\DclrSty{\EndTxt}}}
\def\AltPrecTmp {\ensuremath{\DclrSty{\AltPrecTxt}}}
\newcommand{\AtLe}[1] {\ensuremath{\DclrSty{\AtLeTxt}\left(#1\right)}}
\newcommand{\AtMo}[1] {\ensuremath{\DclrSty{\AtMoTxt}\left(#1\right)}}
\newcommand{\End}[1] {\ensuremath{\DclrSty{\EndTxt}\left(#1\right)}}
\newcommand{\Resp}[2] {\ensuremath{\DclrSty{\RespTxt}\left(#1,#2\right)}}
\newcommand{\ChnResp}[2] {\ensuremath{\DclrSty{\ChnRespTxt}\left(#1,#2\right)}}
\newcommand{\Prec}[2] {\ensuremath{{\DclrSty{\PrecTxt}}\left(#1,#2\right)}}
\newcommand{\AltPrec}[2] {\ensuremath{\DclrSty{\AltPrecTxt}\left(#1,#2\right)}}
\newcommand{\AltPrecShort}[2] {\ensuremath{\DclrSty{\AltPrecTxtShort}\left(#1,#2\right)}}
\tikzstyle{truecolor}=[fill=green!30]
\tikzstyle{temptruecolor}=[fill=cyan!30]
\tikzstyle{falsecolor}=[fill=red!30]
\tikzstyle{tempfalsecolor}=[fill=orange!30]
\tikzstyle{truestate}=[state,truecolor,line width=2pt]
\tikzstyle{temptruestate}=[state,temptruecolor,very thick]
\tikzstyle{falsestate}=[state,falsecolor,very thick,densely dotted]
\tikzstyle{tempfalsestate}=[state,tempfalsecolor,very thick,densely dashed]
\tikzstyle{monitorstate}=[
\tikzstyle{truemonstate}=[
\tikzstyle{temptruemonstate}=[
\tikzstyle{falsemonstate}=[
\tikzstyle{tempfalsemonstate}=[
\tikzstyle{dot}=[
\def\Pn {\ensuremath{\mathcal{P}\hspace{-0.5ex}\mathcal{N}}}
\def\Wn {\ensuremath{\mathcal{W}\hspace{-0.5ex}\mathcal{N}}}
\DeclareDocumentCommand{\ReachGraph}{m o}
    {%
        \ensuremath{\IfValueTF{#2}{\mathcal{G}_{#1}^{#2}}{\mathcal{G}_{#1}}}
    }
\DeclareDocumentCommand{\ReachMarkings}{m o}
    {%
        \ensuremath{\IfValueTF{#2}{\mathcal{M}_{#1}^{#2}}{\mathcal{M}_{#1}}}
    }
\def\Places {\ensuremath{P}}
\def\Place {\ensuremath{p}}
\def\InPl {\ensuremath{{\blacktriangleright}}}
\def\OutPl {\ensuremath{\scriptstyle{\blacksquare}\displaystyle}}
\newcommand{\Pre}[1]{%
	{\begin{tikzpicture}[baseline=-0.6ex,>=stealth',x=0.75ex]%
			\draw[-Stealth] (0,0)--(1,0);%
	\end{tikzpicture}}%
	\ensuremath{#1}%
}
\newcommand{\Post}[1]{%
	\ensuremath{#1}%
	{\begin{tikzpicture}[baseline=-0.6ex,>=stealth',x=0.75ex]%
			\draw[-Stealth] (0,0)--(1,0);%
	\end{tikzpicture}}%
}
\def\Transitions {\ensuremath{T}}
\def\Transition {\ensuremath{t}}
\newcommand{\Tx}[1]{\ensuremath{\Transition{#1}}}
\def\FlowRel {\ensuremath{F}}
\def\Marking {\ensuremath{M}}
\DeclareDocumentCommand{\Fires}{m o m m}
    {%
        \ensuremath{\IfValueTF{#2}{(#2, #1) [#3\rangle (#2, #4)}{#1 [#3\rangle #4}}
    }
\def\PnFireSeq {\ensuremath{\sigma}}
\def\PnSetFireSeq {\ensuremath{\mathscr{L}}}
\newcounter{dummy} 
\newcounter{dummy2}
\newcounter{dummy4}
\newcounter{dummy6}
\newcounter{dummy7}
\newcounter{dummy8}
\newtheorem{theorem}[dummy]{Theorem}
\newtheorem{definition}[dummy2]{Definition}
\newtheorem{observation}[dummy8]{Observation}
\newtheorem{corollary}[dummy7]{Corollary}
\newtheorem{example}[dummy4]{Example}
\theoremstyle{nonumberplain}
\newtheorem{proof}[dummy6]{Proof}
\renewcommand{\arraystretch}{1.5}
\newcolumntype{d}{>{\columncolor{gray!10}}c}
\newcolumntype{m}{>{\columncolor{gray!10}}l}
\newenvironment{iiilist}%
{\begin{inparaenum}[\itshape(i)\upshape]}%
{\end{inparaenum}}
\NewDocumentEnvironment{AuthNote}{+o+o}{%
	\IfValueT{#2}{\marginnote{\scriptsize{#2}}}%
	\begin{scriptsize}
		\colorbox{gray}%
		{\color{white} Note\IfValueT{#1}{ (#1)}:}%
		\quad%
		\color{brown}
}{%
	\normalcolor
	\end{scriptsize}
}
\newcommand{\StarNum}[1] {%
	\resizebox{2.25ex}{!}{%
	\ensuremath{%
		\begin{tikzpicture}[baseline=(star.south)] 
				\node [star, star point height=1ex, 
				minimum size=.5ex, inner sep=0.05ex, draw](star)
				at (1,1) {#1};
		\end{tikzpicture}}%
	}%
}
\def\StarOne   {\StarNum{1}}
\def\StarTwo   {\StarNum{2}}
\def\StarThree {\StarNum{3}}
\newcommand{\LipsumGray}[1][]{{\color{gray}\ifthenelse{\equal{#1}{}}{\lipsum}{\lipsum[#1]}}}
\newcolumntype{D}[1]{S[
	table-omit-exponent,
	round-mode=places,
	round-integer-to-decimal,
	round-precision={#1}]} 
\DeclareDocumentCommand{\crefalgln}{ o m }{%
	\IfNoValueF{#1}{\cref{#1}, }
	{\hyperref[#2]{ln.~\ref{#2}}}}
\DeclareDocumentCommand{\Crefalgln}{ o m }{%
	\IfNoValueF{#1}{\Cref{#1}, }
	{\hyperref[#2]{line~\ref{#2}}}}
\newacronym[\glslongpluralkey={Business Processes}]{bp}{BP}{Business Process}
\newacronym{bpi}{BPI}{Business Process Intelligence}
\newacronym{bpm}{BPM}{Business Process Management}
\newacronym{bpms}{BPMS}{Business Process Management System}
\newacronym{bpmn}{BPMN}{Business Process Model and Notation}
\newacronym{cmmn}{CMMN}{Case Management Model and Notation}
\newacronym{cpn}{CPN}{colored Petri net}
\newacronym{dcrg}{DCR~Graph}{Dynamic Condition Response Graph}
\newacronym{gsm}{GSM}{Guard Stage Milestone}
\newacronym{kpi}{KPI}{Key Performance Indicator}
\newacronym{ocbc}{OCBC}{Object-centric Behavioral Constraints}
\newacronym{soa}{SOA}{Service-Oriented Architecture}
\newacronym{pn}{PN}{Petri net}
\newacronym{wf}{WF}{workflow}
\newacronym{wfms}{WfMS}{Workflow Management System}
\newacronym{xes}{XES}{eXtensible Event Stream}
\newacronym{yawl}{YAWL}{Yet Another Workflow Language}
\newglossaryentry{task}{%
	name={task},description={the non-divisible, elementary activity}}
\def\paramx {\ensuremath{x}}
\def\paramy {\ensuremath{y}}
\def\lettera {\ensuremath{\textsl{a}}}
\def\letterb {\ensuremath{\textsl{b}}}
\def\letterc {\ensuremath{\textsl{c}}}
\def\letterd {\ensuremath{\textsl{d}}}
\def\lettere {\ensuremath{\textsl{e}}}
\def\letterf {\ensuremath{\textsl{f}}}
\def\letterg {\ensuremath{\textsl{g}}}
\def\letteri {\ensuremath{\textsl{i}}}
\def\letterj {\ensuremath{\textsl{j}}}
\def\lettero {\ensuremath{\textsl{o}}}
\def\letteru {\ensuremath{\textsl{u}}}
\def\letterv {\ensuremath{\textsl{v}}}
\def\letterw {\ensuremath{\textsl{w}}}
\def\letterA {\ensuremath{\textsl{A}}}
\newcommand{\Task}[1] {\ensuremath{\scalebox{0.85}{\normalfont\textsf{#1}}}}
\def\taska {\Task{a}}
\def\taskb {\Task{b}}
\def\taskc {\Task{c}}
\def\taskd {\Task{d}}
\def\taske {\Task{e}}
\def\taskf {\Task{f}}
\def\taskg {\Task{g}}
\def\taski {\Task{i}}
\def\taskj {\Task{j}}
\def\taskk {\Task{k}}
\def\taskm {\Task{m}}
\def\taskn {\Task{n}}
\def\taskp {\Task{p}}
\def\taskr {\Task{r}}
\def\tasku {\Task{u}}
\def\taskv {\Task{v}}
\def\taskw {\Task{w}}
\def\tasky {\Task{y}}
\def\taskz {\Task{z}}
\newglossaryentry{promod}{%
	name={process model},description={the model of a process}
}
\def\LogAlph {\ensuremath{\Sigma}}
\newglossaryentry{logalph}{
	name={log alphabet},description={the process alphabet of activities, as reflected in a log},%
	symbol={\LogAlph}}
\def\ProActvts {\ensuremath{\mathrm{Act}}}
\newglossaryentry{proalph}{
	name={process alphabet},description={the process alphabet of activities},%
	symbol={\ProActvts}}
\def\Evt {\ensuremath{e}}
\newglossaryentry{evt}{
	name={event},description={a record of an instantaneous fact during the process enactment},%
	symbol={\Evt}}
\def\Trc { \ensuremath{t} }
\newglossaryentry{trace}{
	name={trace},description={a sequence of \glsplural{evt}},%
	symbol={\Trc}}
\def\EvtLog {\ensuremath{L}}
\newglossaryentry{evtlog}{
	name={event log},description={a collection of \glstext{evttrace}s},%
	symbol={\EvtLog}}
\newglossaryentry{declare}{%
	name={\Declare},description={a declarative process modelling language and notation}}
\newglossaryentry{declaspec}{%
	name={declarative specification},description={a process specification, expressed by means of constraints},
	symbol={\DecSpec}
}
\newglossaryentry{mindeclamodel}{%
	name={discovered \glsentrytext{declaspec}},description={\glsentrydesc{declaspec}, discovered from an \glsentrytext{evtlog}},
	symbol={\MinDecSpec}
}
\newglossaryentry{minerful}{%
	name={\DclrSty{miner}ful},description={A declarative process discovery algorithm}}
\newacronym{mf}{Mf}{\gls{minerful}}
\newglossaryentry{minerfulVac}{%
	name={MINERful Vacuity Checker},description={\glsentrytext{minerful}} algorithm with semantical vacuity detection}
\newacronym{mfv}{Mf-Vchk}{\gls{minerfulVac}}
\newacronym{dmm}{DMM}{Declare Maps Miner}
\newglossaryentry{decmapmin}{%
	name={Declare Maps Miner},description={the declarative process discovery algorithm \glsentrytext{decmapmin}}}
\newacronym{dmm2}{DM2}{Declare Miner 2}
\newglossaryentry{decmapmin2}{%
	name={Declare Miner 2},description={improvement of \glsentrytext{decmapmin} algorithm}}
\newglossaryentry{janus}{%
	name={Janus},description={the declarative process discovery algorithm \glsentrytext{janus}}}
\newglossaryentry{subsum}{%
	name={subsumption},description={is subsumed by},%
	symbol={\Subsum}}
\newglossaryentry{relaxop}{%
	name={relaxation},description={relaxation operator, climbing the \glsentrytext{subsum} hierarchy}}
\newglossaryentry{actv}{%
	name={activation},description={the activation of a constraint}}
\newglossaryentry{activator}{name={activator},description={the event that signals the occurrence of the activation in the trace}}
\newglossaryentry{target}{%
	name={target},description={target}}
\newglossaryentry{cns}{%
	name={constraint},description={a temporal business process rule},
	symbol={\Cns}
}
\newglossaryentry{welldef}{%
	name={well-defined},description={of \glsentrytext{con}s for which a finite non-empty trace exists that complies with them}
}
\newglossaryentry{cnsset}{%
	name={constraint set},description={a set of \glsentrytext{con}s},
	symbol={\CnsSet}
}
\newglossaryentry{cnspar}{%
	name={parameter},description={a parameter of a \glsentrytext{con}},
}
\newglossaryentry{cnsarity}{%
	name={arity},description={number of parameters of a \glsentrytext{con}},
}
\newglossaryentry{exi}{
	name={existence},
	description={constrains single tasks}
}
\newglossaryentry{exicon}{
	name={\glsentrytext{exi} \glsentrytext{con}},
	description={constrains single tasks}
}
\newglossaryentry{posicon}{
	name={position \glsentrytext{con}},
	description={constrains the position of tasks}
}
\newglossaryentry{cardicon}{
	name={cardinality \glsentrytext{con}},
	description={limits the number of tasks}
}
\newglossaryentry{rela}{
	name={relation},
	description={constraint on pairs of tasks}
}
\newglossaryentry{relacon}{
	name={\glsentrytext{rela} \glsentrytext{con}},
	description={constraint on pairs of tasks}
}
\newglossaryentry{unirelacon}{
	name={unidirectional \glsentrytext{relacon}},
	description={constraint on pairs of tasks, out of which one is the activation, as the other is the target}
}
\newglossaryentry{unifwrelacon}{
	name={\glsentrytext{fw}-\glsentrytext{unirelacon}},
	description={constraint on pairs of tasks, having the first parameter as the activation, and the second one as the target}
}
\def\FwCns {\ensuremath{\mathit{fw}}}
\newglossaryentry{fw}{
	name={forward},
	description={forward constraint},
	symbol={\FwCns}
}
\newglossaryentry{unibwrelacon}{
	name={\glsentrytext{bw}-\glsentrytext{unirelacon}},
	description={constraint on pairs of tasks, having the second parameter as the activation, and the first one as the target}
}
\def\BwCns {\ensuremath{\mathit{bw}}}
\newglossaryentry{bw}{
	name={backward},
	description={backward constraint},
	symbol={\BwCns}
}
\newglossaryentry{corelacon}{
	name={coupling \glsentrytext{con}},
	description={constraint based on pairs of relation constraints}
}
\newglossaryentry{nega}{
	name={negative},
	description={of a constraint, that negates a coupling relation constraint}
}
\newglossaryentry{negacon}{
	name={\glsentrytext{nega} \glsentrytext{con}},
	description={constraint negating a coupling relation constraint}
}
\newglossaryentry{dectemp}{%
	name={template},description={the template of a \glsentrydesc{con}},
	symbol={\DecTemp}}
\newglossaryentry{cnstype}{%
	name={type},description={the type of a \glsentrydesc{cnstemp}}}
\newglossaryentry{cnsrep}{name={repertoire},description={the repertoire of declarative \glsentrytext{temp}s},
	symbol={\DecRepertoire}}
\newglossaryentry{cnsuniv}{name={\glsentrytext{con}s universe},description={the set of \glsentrytext{declare} \glsentrytext{temp}s over the process alphabet reflected in the log}}
\newglossaryentry{cnsinst}{%
	name={\glsentrytext{cnstemp} instantiation relation},description={the assignment relation instantiating \glsentrytext{cnstemp}s into \glsentrytext{con}s, namely assigning \glsentrytext{task}s to \glsentrytext{cnspar}s.},
	symbol={\CnsInstRelation}}
\newglossaryentry{cnsinterp}{
	name={interpretation function},description={function interpreting a \glsentrytext{declamodel}},
	symbol={\CnsInterpFun}}
\def\RelaConTemp {\ensuremath{\mathcal{R}}} 
\newglossaryentry{relacontemp}{%
	name={relation template},description={the template of a relation \glsentrydesc{con}},
	symbol={\RelaConTemp}}
\def\ExiConTemp {\ensuremath{\mathcal{E}}}
\newglossaryentry{exicontemp}{%
	name={existence template},description={the template of an existence \glsentrydesc{con}},
		symbol={\ExiConTemp}}
\newglossaryentry{support}{%
	name={support},description={the support of a \glsentrydesc{con}},
	symbol={\Supp}}
\newglossaryentry{conf}{%
	name={confidence},description={the confidence level of a \glsentrydesc{con}},
	symbol={\Conf}}
\newglossaryentry{intf}{%
	name={interest factor},description={the interest factor of a \glsentrydesc{con}},
	symbol={\IntF}}
\newglossaryentry{cnseval}{
	name={evaluation},description={evaluation of a \glsentrytext{con} or a \glsentrytext{declamodel} over a \glsentrytext{evttrace} or an \glsentrytext{evtlog}},
	symbol={\CnsEvalFunctor}}
\newglossaryentry{fulfilment}{name={fulfilment},description={satisfaction of a constraint on a trace in which the activation occurs}}
\def\Au {\ensuremath{\mathcal{A}}}
\def\AuAlph {\ensuremath{\Sigma}}
\def\AuSym {\ensuremath{\ell}}
\def\AuTrns {\ensuremath{\delta}}
\def\AuSt {\ensuremath{s}}
\def\AuStInit {\ensuremath{\AuSt_0}}
\def\AuStSet {\ensuremath{S}}
\def\AuStAcc {\ensuremath{\AuSt_\text{F}}}
\newacronym[symbol=\Au,longplural={finite state automata}]{fsa}{FSA}{finite state automaton}
\newacronym[symbol=\Au,longplural={deterministic finite-state automata}]{dfa}{DFSA}{deterministic finite-state automaton}
\newacronym[symbol=\Au,longplural={nondeterministic finite-state automata}]{nfa}{NFSA}{nondeterministic finite-state automaton}
\newglossaryentry{fsainit}{name={initial state},description={initial state of the automaton},
	symbol=\AuStInit}
\def\LanguageFunctor {\ensuremath{\mathscr{L}}}
\newcommand{\LanguageFunc}[1] {\ensuremath{\LanguageFunctor\!\left(#1\right)}}
\def\LTL {\ensuremath{\textsc{LTL}}}
\def\LDL {\ensuremath{\textsc{LDL}}}
\def\LDLf {\ensuremath{\textsc{LDL}_f}}
\def\LTLf {\ensuremath{\textsc{LTL}_f}}
\newacronym{po}{PO}{Partial Order}
\newacronym{tl}{TL}{Temporal Logic}  
\newacronym{ltl}{\LTL}{Linear Temporal Logic}
\newacronym{ldl}{\LDL}{Linear Dynamic Logic}
\newacronym{ldlf}{\LDLf}{Linear Dynamic Logic over Finite Traces}
\newacronym{fol}{FOL}{First Order Logic}
\newacronym{ltlf}{\LTLf}{Linear Temporal Logic on Finite Traces}
\newacronym{ltlp}{LTLp}{Linear-time Temporal Logic with Past}
\def\ltlpf {\ensuremath{\textrm{LTLp}_f}}
\newacronym{ltlpf}{\ltlpf}{Linear-time Temporal Logic with Past on Finite Traces}
\newacronym{mso}{MSO}{Monadic Second Order Logic}
\newacronym{rex}{RE}{regular expression}
\def\MultiSetFunctor {\ensuremath{\mathbb{M}}}
\newglossaryentry{multiset}{
	name={multi-set},description={a collection possibly containing multiple units of the same element},
	symbol={\MultiSetFunctor}}
\def\PowerSetFunctor {\ensuremath{\mathbb{P}}}
\newglossaryentry{powerset}{
	name={power-set},description={the collection of sets generated by all combinations without repetition of elements in a set},
	symbol={\PowerSetFunctor}}
\newcommand{\ltrue}{\ensuremath{\mathbf{true}}}
\newcommand{\lfalse}{\ensuremath{\mathbf{false}}}
\newcommand{\limply}{\ensuremath{\to}}
\newcommand{\lmodel}{\ensuremath{\vDash}}
\newcommand{\lnmdel}{\ensuremath{\nvDash}}
\def\ltllnext {\ensuremath{\;\mathop{\mathrm{\mathbf{X}}}\;}}
\def\ltllevtly {\ensuremath{\;\mathop{\mathrm{\mathbf{F}}}\;}}
\def\ltllalws {\ensuremath{\;\mathop{\mathrm{\mathbf{G}}}\;}}
\def\ltlluntil {\ensuremath{\;\mathop{\mathrm{\mathbf{U}}}\;}}
\def\ltllyday {\ensuremath{\;\mathop{\mathrm{\mathbf{Y}}}\;}}
\def\ltllonce {\ensuremath{\;\mathop{\mathrm{\mathbf{O}}}\;}}
\def\ltllalws {\ensuremath{\;\mathop{\mathrm{\mathbf{G}}}\;}}
\def\ltllsince {\ensuremath{\;\mathop{\mathrm{\mathbf{S}}}\;}}
\def\ltlnext{\ltllnext}
\def\ltlevtly{\ltllevtly}
\def\ltlalws{\ltllalws}
\def\ltluntil{\ltlluntil}
\def\ltlyday{\ltllyday}
\def\ltlonce{\ltllonce} 
\def\ltlalws{\ltllalws} 
\def\ltlsince{\ltllsince} 
\def\lstart {\ensuremath{\mathbf{start}}}
\def\linstant {\ensuremath{i}}
\def\ltrace {\ensuremath{\pi}}
\def\ltrlen {\ensuremath{n}}
\def\currvio {\ensuremath{{\textsc{c}}\!\bot}}
\def\permvio {\ensuremath{{\textsc{p}}\!\bot}}
\def\currsat {\ensuremath{{\textsc{c}}\!\top}}
\def\permsat {\ensuremath{{\textsc{p}}\!\top}}
\NewDocumentCommand{\valuation}{m o}{%
	\ensuremath{%
		\left\llbracket%
		\IfValueTF{#2}{%
			#2 \mapsfrom #1%
		}{%
			#1
		}%
		\right\rrbracket
	}%
}
\newcommand{\FormulaOf}[1]{\ensuremath{\varphi_{#1}}}
\newglossaryentry{satis}{name={satisfaction},description={evaluation as true of a formula on a structure}}
\newglossaryentry{verif}{name={verification},description={evaluation of a formula on a structure}}
\def\post#1{\ensuremath{{#1}\kern-.05ex\bullet}}
\newcommand{\keywords}[1]{%
  \vspace{0.5em}
  \noindent\textbf{Keywords:} #1
}
\title{From Sound Workflow Nets \\ to LTL$_f$ Declarative Specifications \\ by Casting Three Spells}
\author[1]{\small Luca Barbaro}
\author[2]{Giovanni Varricchione}
\author[3]{Marco Montali} 
\author[2]{Claudio Di Ciccio}
\affil[1]{\small\textit{Sapienza University of Rome, Italy\\
\href{mailto:luca.barbaro@uniroma1.it}{luca.barbaro@uniroma1.it}}}
\affil[2]{\small\textit{Utrecht University, Netherlands\\
\href{mailto:g.varricchione@uu.nl}{g.varricchione@uu.nl}, 
\href{mailto:c.diciccio@uu.nl}{c.diciccio@uu.nl}}}
\affil[3]{\small\textit{Free University of Bozen-Bolzano, Italy\\
\href{mailto:montali@inf.unibz.it}{montali@inf.unibz.it}}}
\date{} 
\begin{document}

\maketitle

\vspace*{-4em}

\begin{abstract}
In process management, effective behavior modeling is essential for understanding execution dynamics and identifying potential issues. 
Two complementary paradigms have emerged in the pursuit of this objective: the imperative approach, representing all allowed runs of a system in a graph-based model, and the declarative one, specifying the rules that a run must not violate in a constraint-based specification.
Extensive studies have been conducted on the synergy and comparisons of the two paradigms. To date, though, whether a declarative specification could be systematically derived from an imperative model such that the original behavior was fully preserved (and if so, how) remained an unanswered question. 
In this paper, we propose a three-fold contribution.
(1)~We introduce a systematic approach to synthesize declarative process specifications from safe and sound Workflow nets.
(2)~We prove behavioral equivalence of the input net with the output specification, alongside related guarantees.
(3)~We experimentally demonstrate the scalability and compactness of our encoding through tests conducted with synthetic and real-world testbeds.

\end{abstract}

 \keywords{%
 	Process modeling,
 	\and
 	Petri nets,
        \and
        Linear-time Temporal Logic on finite traces,
 	\and
 	{\Declare}%
}

\section{Introduction}
\label{sec:intro}
The act of modeling a process is a key element in a multitude of domains, including business process management~\cite{DBLP:books/sp/DumasRMR18}, and is specifically tailored to meet the specific requirements and objectives of the individual application scenarios.
Two fundamental, complementary paradigms cover the spectrum of modeling: the imperative (e.g., Worflow nets~\cite{DBLP:journals/jcsc/Aalst98} and their derived industrial standard BPMN~\cite{DBLP:books/sp/DumasRMR18}) and the declarative (e.g., \Declare~\cite{Pesic/2008:ConstraintbasedWorkflow} and DCR Graphs~\cite{DBLP:journals/corr/abs-1110-4161}).
Generally, the former class offers the opportunity to explicitly capture the set of actions available at each reachable state of the process, from start to end.
However, such models often show limitations when it comes to capture flexibility in execution, since the possible runs highly vary and their graph-based structure gets cluttered. 
To compactly represent that variability, declarative specifications depict the rules that govern the behavior of every instance, leaving the allowed sequences implicit as long as none of those rules is violated.

Research has acknowledged that none of the available representations would be superior in all cases, as imperative and declarative approaches are apt to different comprehension tasks~\cite{DBLP:conf/bpm/PichlerWZPMR11}.
The ability to translate one representation to the other while preserving behavioral equivalence would allow the comparison and selection of the most suitable one.
The first work in this direction is~\cite{DBLP:conf/simpda/PrescherCM14}, where a systematic procedure is proposed to turn a declarative specification into an imperative model. Other endeavors followed to close the circle by providing an approximate solution to the inverse path (i.e., from an imperative model to a declarative specification), resorting on re-discovery over simulations~\cite{DBLP:journals/corr/abs-2407-02336},
state space exploration~\cite{DBLP:conf/bpm/RochaZA24}, or behavioral comparison~\cite{DBLP:conf/bpm/BergmannRK23}. 

The goal of this work is to close the existing gap of this procedural-to-declarative direction. To this end, we show how to encode a safe and sound Workflow net~\cite{DBLP:journals/jcsc/Aalst98}
into a behaviorally equivalent {\Declare} specification~\cite{DiCiccio.Montali/PMH2022:DeclarativeProcessMining}. 
%
%
The three spells mentioned in the title of this paper refer to the fact that we only employ three parametric constraint types (\emph{templates}) in the {\Declare} repertoire.
Importantly, the encoding is obtained in one pass and modularly over the net, preserving runs and choice points without incurring the state space explosion caused by concurrency unfolding.
A byproduct of the encoding is that a safe and sound Workflow net induces a star-free regular language when considering transitions of the former as the alphabet of the latter.
This strengthens the well-known fact that languages induced by sound Workflow nets are regular.
Then, we evaluate the scalability of our approach by experimentally testing our proof-of-concept implementation against synthetic and real-world testbeds. Also, we show a downstream reasoning task on process diagnostics with public benchmarks. 

The remainder of the paper is organized as follows. \Cref{sec:background} provides an overview of the background knowledge our research is built upon. We describe our algorithm and formally discuss its correctness and complexity in \Cref{sec:algo}. \Cref{sec:evaluation} evaluates our implementation to demonstrate the feasibility of our approach. Finally, we conclude by discussing related works in \Cref{sec:related} and outlining future research directions in \Cref{sec:conclusion}.

\section{Background}
\label{sec:background}


%
In this section, we formally describe the foundational pillars our work is built upon. 
%
%
%
%
%
%
%
%
%

\subsection{Linear Temporal Logic on finite traces}
\label{sec:declare:ltl}
%
%

{\LTLf} has the same syntax of \LTL~\cite{Pnueli/FOCS1977:LTL}, but is interpreted on finite traces.
Here, we consider the {\LTL} dialect including past modalities~\cite{Lichtenstein.etal/LogicsofPrograms1985:TheGloryofthePast} for declarative process specifications as in~\cite{Cecconi.etal/BPM2018:Janus}.
From now on, we fix a finite set $\LogAlph$ representing an alphabet of propositional symbols.
A (finite) \emph{trace} $\ltrace = \langle a_1,\ldots,a_n \rangle \in \LogAlph$ is a finite sequence of symbols of length $|\ltrace|=n$ (with $n \in \mathbb{N}$), where the occurrence of symbol $a_i$ at instant $i$ of the trace represents an \emph{event} that witnesses $a_i$ at instant $i$ ---we write $\ltrace(i) = a_i$. Notice that \emph{at each instant we assume that one and only one symbol occurs}. Using standard notation from regular expressions, $\LogAlph^*$ denotes the overall set of finite traces derived from events belonging to $\LogAlph$.

\begin{definition}[{\LTLf}]
\textbf{(Syntax)}
\label{def:ltlf-syntax}
Well-formed {\LTLf} formulae are built from a finite non-empty alphabet of symbols $\LogAlph \ni \lettera$, 
 the unary temporal operators {\ltlnext} (``\emph{next}'') and {\ltlyday} (``\emph{yesterday}''), and the binary temporal operators {\ltluntil} (``\emph{until}'') and {\ltlsince} (``\emph{since}'') as follows:
$$
\varphi \Coloneqq
\lettera 
\mid
(\lnot\varphi) 
\mid
(\varphi_1 \land \varphi_2)
\mid
( \ltlnext \varphi)
\mid
(\varphi_1 \ltluntil \varphi_2)
\mid
(\ltlyday\varphi)
\mid
(\varphi_1 \ltlsince \varphi_2).
$$
\textbf{(Semantics)}
\label{def:ltlf-semantics}
An {\LTLf} formula $\varphi$ is inductively \emph{satisfied} in some instant {\linstant} (with $ 1 \leq \linstant \leq \ltrlen $) of a trace {\ltrace}
of length $\ltrlen \in \mathbb{N}$, written $\ltrace, \linstant \lmodel \varphi$, if the following holds:\\
\begin{itemize*}[label={},itemjoin=\hspace{0.5ex}]
	\item
	$ \ltrace, \linstant \lmodel \lettera $ iff $ \ltrace(\linstant) $ is assigned with $ \lettera $;
	\quad \item 
        $ \ltrace, \linstant \lmodel \lnot\varphi $ iff $ \ltrace, \linstant \lnmdel \varphi $;
	\quad \item
	$ \ltrace, \linstant \lmodel \varphi_1\land\varphi_2 $ iff $ \ltrace, \linstant \lmodel \varphi_1 $ and $ \ltrace, \linstant \lmodel \varphi_2 $; 
	\\ \item $ \ltrace, \linstant \lmodel \ltlnext\varphi $ iff $ i < \ltrlen $ and $ \ltrace, \linstant+1 \lmodel \varphi $; 
	\quad \item $ \ltrace, \linstant \lmodel \ltlyday\varphi $ iff $ \linstant>1 $ and $ \ltrace, \linstant-1 \lmodel \varphi $; 
	\\ \item $ \ltrace, \linstant \lmodel \varphi_1\ltluntil\varphi_2 $ iff there exists $i \leq j \leq n$ such that $ \ltrace,j \lmodel\varphi_2 $, and $ \ltrace, k \lmodel \varphi_1 $ for all $k$ s.t.\ $ {\linstant\leq k<j} $; 
	\\ \item $ \ltrace, \linstant \lmodel \varphi_1\ltlsince\varphi_2 $ iff there exists $1 \leq j \leq i$ such that $ \ltrace, j \lmodel\varphi_2 $, and $ \ltrace, k \lmodel \varphi_1 $ for all $k$ s.t.\ $ {j < k \leq \linstant} $. 
\end{itemize*} 

A formula $\varphi$ \emph{is satisfied by} a trace $\ltrace$, 
written $\ltrace\lmodel {\varphi}$, iff $\ltrace, 1 \lmodel {\varphi}$. 
\end{definition}
\noindent
From the basic operators above, the following can be derived: 
\begin{itemize*}[label={},itemjoin=\hspace{0.5ex}]
	\item Classical boolean abbreviations $ \ltrue, \lfalse, \lor, \limply $;
	\item $ \ltlevtly \varphi \equiv \ltrue \ltluntil \varphi $ indicating that $ \varphi $ eventually holds true in the trace (``\emph{eventually}''); 
	\item $ \ltlalws \varphi \equiv \lnot \ltlevtly \lnot\varphi $ indicating that $ \varphi $ holds true from the current on (``\emph{always}'').
\end{itemize*}
%
%
%
	
    As an example, let 
	$\ltrace = \langle \lettera, \letterb, \letterc, \lettere, \letterf, \letterg, \letteru, \letterv \rangle$
	be a trace and $\varphi_1$, $\varphi_2$ and $\varphi_3$ three {\LTLf} formulae defined as follows:
	\begin{inparadesc}
		\item[$\varphi_1$]$\doteq \letterf$;
		\item[$\varphi_2$]$\doteq \ltlyday \letterc$;
		\item[$\varphi_3$]$\doteq \ltlalws ( \letterf \limply \ltlyday \lettere ) $.
	\end{inparadesc}
	We have that
	\begin{inparaitem}[]
		$\ltrace, 1 \lnmdel \varphi_1$ whereas $\ltrace, 5 \lmodel \varphi_1$;
		$\ltrace, 1 \lnmdel \varphi_2$ while $\ltrace, 4 \lmodel \varphi_2$;
		$\ltrace, 1 \lmodel \varphi_3$ (hence, $\ltrace \models \varphi_3$); in fact, $\ltrace, \linstant \lmodel \varphi_3$ for any instant $1 \leq \linstant \leq |\ltrace|$.
	\end{inparaitem}
%

\subsection{Finite State Automata}
\label{sec:fsa}
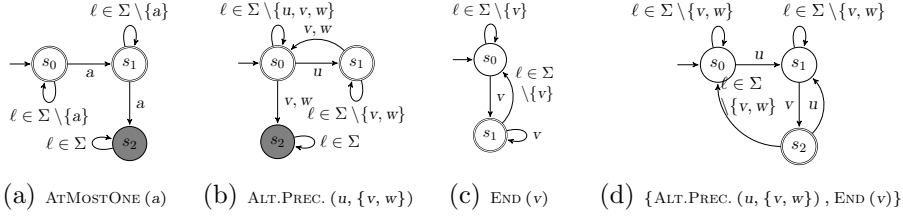
\begin{figure}[tb]%
	\def\maxheight{0.08\paperheight}
	\def\maxwidth{0.9\textwidth}
	\centering%
	\begin{subfigure}{0.22\textwidth}%
		\vbox to \maxheight {%
			\centering\maxsizebox{\maxwidth}{\maxheight}{%
				\def\taskASymbol{\lettera}
\begin{tikzpicture}[->, >=stealth', shorten >=1pt, auto, bend angle=45, initial text = {}]
  \tikzstyle{every state}=[minimum size=1em, align=center]

  \node[state, initial, accepting]    (0) {$\AuSt_0$};
  \node[state, right=of 0, accepting] (1) {$\AuSt_1$};
  \node[state, below=of 1, fill=gray] (2) {$\AuSt_2$};

  \path
  (0) edge [loop below] node []      {$\AuSym \in \AuAlph \setminus\! \left\lbrace \taskASymbol \right\rbrace$} (0)
  (0) edge [          ] node [below] {$\taskASymbol$}                                                           (1)
  (2) edge [loop left ] node []      {$\AuSym \in \AuAlph$}                                                     (2)
  (1) edge [          ] node [right] {$\taskASymbol$}                                                           (2)
  (1) edge [loop above] node         {$\AuSym \in \AuAlph \setminus\! \left\lbrace \taskASymbol \right\rbrace$} (1)
  ;  
\end{tikzpicture}%
			}%
		}%
		\caption{\tiny{\AtMo{\lettera}}}
		\label{fig:constraint:automata:atmoone:inst}%
	\end{subfigure}%
	\hfill
	\begin{subfigure}{0.25\textwidth}%
		\vbox to \maxheight {%
			\centering\maxsizebox{\maxwidth}{\maxheight}{%
				\def\taskASymbol{\letteru}
\def\taskBSymbol{\letterv,\letterw}
\begin{tikzpicture}[->, >=stealth', shorten >=1pt, auto, bend angle=45, initial text = {}]
	\tikzstyle{every state}=[minimum size=1em, align=center]
	
	\node[state, initial, accepting]    (0) {$\AuSt_0$};
	\node[state, right=of 0, accepting] (1) {$\AuSt_1$};
	\node[state, below=of 0, fill=gray] (2) {$\AuSt_2$};
	
	\path
	(0) edge [loop above] node []      {$\AuSym \in \AuAlph \setminus\! \left\lbrace \taskASymbol, \taskBSymbol \right\rbrace$} (0)
	(0) edge [          ] node [below] {$\taskASymbol$}                                                            (1)
	(0) edge [          ] node [] {$\taskBSymbol$}                                                                 (2)
	(2) edge [loop right] node [] {$\AuSym \in \AuAlph$}                                                      (2)
	(1) edge [bend right] node [above=-2pt] {$\taskBSymbol$}                                                            (0)
	(1) edge [loop below] node         {$\AuSym \in \AuAlph \setminus\! \left\lbrace \taskBSymbol \right\rbrace$}        (1)
	;  
\end{tikzpicture}%
			}%
		}%
		\caption{\tiny{\AltPrecShort{\letteru}{\{\letterv,\letterw\}}}}
		\label{fig:constraint:automata:altprec:inst}%
	\end{subfigure}%
	\hfill
	\begin{subfigure}{0.15\textwidth}%
		\vbox to \maxheight {%
			\centering\maxsizebox{\maxwidth}{\maxheight}{%
				\def\taskASymbol{\letterv}
\begin{tikzpicture}[->, >=stealth', shorten >=1pt, auto, bend angle=45, initial text = {}]
	\tikzstyle{every state}=[minimum size=1em, align=center]
	
	\node[state, initial]               (0) {$\AuSt_0$};
	\node[state, below=of 0, accepting] (1) {$\AuSt_1$};
	
	\path
	(0) edge [loop above] node [] {$\AuSym \in \AuAlph \setminus\! \left\lbrace \taskASymbol \right\rbrace$} (0)
	(0) edge [          ] node [] {$\taskASymbol$}                                                                         (1)
	(1) edge [loop right] node [] {$\taskASymbol$}                                                                   (1)
	(1) edge [bend right] node [right,label={[align=right,xshift=1em,yshift=-1em]$\AuSym \in \AuAlph $ \\ $\setminus\! \left\lbrace \taskASymbol \right\rbrace$}] {}                                                                    (0)
	;  
\end{tikzpicture}%
			}%
		}%
		\caption{\tiny{\End{\letterv}}}%
		\label{fig:constraint:automata:end:inst}%
	\end{subfigure}%
	\hfill
	\begin{subfigure}{0.38\textwidth}%
		\vbox to \maxheight {%
			\centering\maxsizebox{\maxwidth}{\maxheight}{%
				\def\taskASymbol{\letteru}
\def\taskBSymbol{\letterv}
\def\taskCSymbol{\letterw}
\begin{tikzpicture}[->, >=stealth', shorten >=1pt, auto, bend angle=45, initial text = {}]
	\tikzstyle{every state}=[minimum size=1em, align=center]
	
	\node[state, initial]               (0) {$\AuSt_0$};
	\node[state, right=of 0]            (1) {$\AuSt_1$};
	\node[state, below=of 1, accepting] (2) {$\AuSt_2$};
	
	\path
	(0) edge [loop above] node [above,xshift=-2em]      {$\AuSym \in \AuAlph \setminus\! \left\lbrace \taskBSymbol, \taskCSymbol \right\rbrace$} (0)
	(0) edge [          ] node [above] {$\taskASymbol$}                                                            
	(1)
	(1) edge [          ] node [left] {$\taskBSymbol$}                                                            
	(2)
	(2) edge [bend right] node [] {$\taskASymbol$}                                                            
	(1)
	(2) edge [bend left] node [above right,xshift=4pt,yshift=-4pt,label={[align=left]$\AuSym \in \AuAlph $ \\ $\setminus\! \left\lbrace \taskBSymbol, \taskCSymbol \right\rbrace$}] {}
	(0)
	(1) edge [loop above] node [above,xshift=2em] {$\AuSym \in \AuAlph \setminus\! \left\lbrace \taskBSymbol, \taskCSymbol \right\rbrace$}        (1)
	;  
\end{tikzpicture}%
			}%
		}%
		\caption{\tiny{$\{ \AltPrecShort{\letteru}{\{\letterv,\letterw\}}, \End{\letterv} \}$}}%
		\label{fig:constraint:automata:product:inst}%
	\end{subfigure}%
	\caption[Example FSAs]{Example \acrshortpl{fsa} of {\Declare} constraints. The \acrshort{fsa} in \cref{fig:constraint:automata:product:inst} is trimmed.}
	\label{fig:constraint:automata}
\end{figure}
Every {\LTLf} formula can be encoded into a \emph{deterministic finite state automaton}~\cite{DeGiacomo.etal/AAAI2014:ReasoningLTLFinite}.

\begin{definition}[Finite State Automaton]
    A (deterministic) \acrfull{fsa} is a tuple $\Au = \left( \AuStSet, \AuStInit, \AuStAcc, \AuAlph, \AuTrns \right) $, where $\AuStSet$ is a finite set of states, $\AuStInit \in \AuStSet$ is the initial state, $\AuStAcc \subseteq \AuStSet$ is the set of accepting states, $\AuAlph$ is the input alphabet of the automaton, and $\AuTrns: \AuStSet \times \AuAlph \to \AuStSet$ is the state transition function.
\end{definition}
\noindent
\Cref{fig:constraint:automata} depicts three \glspl{fsa}.
An \gls{fsa} reads in input sequences of symbols (``\emph{string}'') of its input alphabet. It starts in its initial state $\AuStInit$ and updates the state after having read each symbol via the state transition function $\AuTrns$. We say that a \gls{fsa} \emph{accepts} 
a string if after reading it is in one of its accepting states (i.e., a state in $\AuStAcc$), and otherwise we say that it \emph{rejects} that string. The set of strings accepted by an \gls{fsa} $\Au$ is called the \emph{language} of $\Au$.

\begin{definition}[Bisimilarity]
Two {\gls{fsa}}s $\Au = (\AuStSet, \AuStInit, \AuStAcc, \AuAlph, \AuTrns) $ and $\Au' = ( \AuStSet', \AuStInit', \AuStAcc', \AuAlph, \AuTrns') $ are \emph{bisimilar} if and only if there exists a relation $\sim \subset \AuStSet \times \AuStSet'$ such that the following hold: 
\begin{itemize*}[label={},itemjoin=\quad]
    \item $(\AuStInit, \AuStInit') \in \sim $;
    \item if $(\AuSt, \AuSt') \in \sim$, then $(\delta(\AuSt, \AuSym), \delta'(\AuSt', \AuSym)) \in \sim$ for any $\AuSym \in \AuAlph$;
    \item if $(\AuSt, \AuSt') \in \sim$, then $\AuSt \in \AuStAcc$ if and only if $\AuSt' \in \AuStAcc'$.
\end{itemize*}
\end{definition}

\begin{observation}\label{obs:deterministic:fsa:bisimulation:language:equivalence}
In the case of \glspl{fsa}, bisimilarity coincides with \emph{language equivalence}, i.e., two \gls{fsa}s are bisimilar if and only if the sets of strings that they accept are equal~\cite{Hopcroft.etal/2006:IntroductiontoAutomataTheoryLanguagesandComputation}. 
\end{observation}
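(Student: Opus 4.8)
The plan is to establish the two directions of the biconditional separately, organizing everything around one auxiliary notion: for a state $\AuSt$ of a deterministic \gls{fsa} $\Au$, its \emph{right language} $L_\Au(\AuSt)$, namely the set of strings $w \in \AuAlph^*$ on which the run of $\Au$ from $\AuSt$ ends in an accepting state. First I would extend the transition function to strings in the standard way, $\hat\delta(\AuSt,\epsilon) = \AuSt$ and $\hat\delta(\AuSt, w\AuSym) = \delta(\hat\delta(\AuSt,w),\AuSym)$, so that $\Au$ accepts $w$ exactly when $\hat\delta(\AuStInit, w) \in \AuStAcc$, and in particular $L_\Au(\AuStInit) = \LanguageFunc{\Au}$. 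I would then record two elementary facts: (i) $\AuSt \in \AuStAcc$ if and only if $\epsilon \in L_\Au(\AuSt)$; and (ii) $L_\Au(\delta(\AuSt,\AuSym)) = \{ w \in \AuAlph^* : \AuSym w \in L_\Au(\AuSt) \}$, i.e.\ an $\AuSym$-transition corresponds to the left quotient by $\AuSym$ --- and here, crucially, I use that $\delta$ is a \emph{total function} (determinism).

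For the direction ``bisimilar $\Rightarrow$ language-equivalent'': given a bisimulation $\sim$ relating $\Au$ and $\Au'$, I would prove by induction on $|w|$ that $(\hat\delta(\AuStInit, w), \hat\delta'(\AuStInit', w)) \in \sim$ for every string $w$; the base case $w = \epsilon$ is the first bisimulation clause, and the inductive step removes the last symbol of $w$ and applies the transition-closure clause. Finally, the accepting-states clause turns this into ``$\Au$ accepts $w$ if and only if $\Au'$ accepts $w$'', for all $w$, which is exactly $\LanguageFunc{\Au} = \LanguageFunc{\Au'}$.

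For the converse, assuming $\LanguageFunc{\Au} = \LanguageFunc{\Au'}$, I would exhibit the Myhill--Nerode-style relation $\AuSt \sim \AuSt'$ defined by $L_\Au(\AuSt) = L_{\Au'}(\AuSt')$ and verify the three clauses: the initial-states clause is the hypothesis restated via $L_\Au(\AuStInit) = \LanguageFunc{\Au}$ and $L_{\Au'}(\AuStInit') = \LanguageFunc{\Au'}$; the transition-closure clause is immediate from fact (ii), since equal right languages have, symbol by symbol, equal left quotients; and the accepting clause is fact (i).

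The step I expect to need the most care --- while not genuinely hard --- is the converse: one must pick the \emph{right} candidate relation (equality of right languages, rather than, say, ``reachable from the respective initial states by the same set of strings'', which is not even a well-defined per-state notion), and one must see that taking an $\AuSym$-step commutes with the left-quotient operation, which is precisely where totality and single-valuedness of $\delta$ are used. It is worth keeping this dependence on determinism explicit, because the equivalence genuinely fails for nondeterministic automata, where bisimilarity is strictly finer than language equivalence; any correct proof must therefore visibly invoke the determinism built into the definition of \gls{fsa}.
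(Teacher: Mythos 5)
Your proposal is correct. Note, however, that the paper does not prove this statement at all: it is stated as an \emph{observation} and discharged by a citation to Hopcroft et al., so there is no internal proof to compare against. Your self-contained argument is the standard one and it goes through with the paper's definitions: the forward direction is the induction on $|w|$ showing that the pair of states reached from $(\AuStInit,\AuStInit')$ by $w$ stays in the bisimulation, and the converse is witnessed by the relation ``equal right languages,'' whose closure under $\AuSym$-transitions is exactly the left-quotient identity $L_{\Au}(\AuTrns(\AuSt,\AuSym))=\{w : \AuSym w\in L_{\Au}(\AuSt)\}$. What your write-up buys over the bare citation is that it makes visible the two hypotheses the observation silently leans on: that $\AuTrns$ is a \emph{total} single-valued function (the paper's \glspl{fsa} are deterministic and complete by definition, which is why $\AuTrns(\AuSt,\AuSym)$ and $\AuTrns'(\AuSt',\AuSym)$ are always defined in the second bisimulation clause), and that the equivalence would fail for nondeterministic automata, where bisimilarity is strictly finer than language equivalence. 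This matters downstream, since the paper's main corollary (bisimilarity of the Workflow net and the synthesized specification) is obtained precisely by routing language equivalence through this observation, and both the reachability FSA and the specification FSA are indeed deterministic. The only cosmetic caveat is that if one works with \emph{trimmed} automata, as the paper does when structurally comparing the two FSAs, the transition function may become partial, and the second clause must then be read as ``defined on one side iff defined on the other, and the results are related''; your quotient argument adapts verbatim.
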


A direct approach that builds a non-deterministic \gls{fsa} $\Au_\varphi$ accepting all and only the traces that satisfy a given {\LTLf} formula $\varphi$ is presented in \cite{DeGiacomo.etal/AAAI2014:ReasoningLTLFinite}.
We make two further observations from~\cite{Hopcroft.etal/2006:IntroductiontoAutomataTheoryLanguagesandComputation}: 
\begin{iiilist}\item the so-obtained \glspl{fsa} can be determinized, minimized, and trimmed using standard techniques without modifying the accepted language, and \item given any two \glspl{fsa} $\Au_\varphi$ and $\Au_{\varphi'}$, their \emph{product} $\Au_\varphi \times \Au_{\varphi'}$ recognizes all and only the traces of $\varphi \wedge \varphi'$.\end{iiilist}

\subsection{{\LTLf}-based declarative specifications}
\label{sec:declare:semantics}
%
\begin{table}[tb]
	\centering
	\caption{Semantics of some {\Declare} constraint templates}
	\label{tab:declare:semantics}
	\begin{adjustbox}{width=1\columnwidth,center=\columnwidth}
		\begin{scriptsize}
			\centering
%
\begin{tabular}{ l l l }
	\toprule
	\textbf{Template} & 
	\textbf{\LTLf expression~\cite{DeGiacomo.Vardi/IJCAI2013:LDLf,Cecconi.etal/BPM2018:Janus}} &
	\textbf{Description}
\\

\midrule
$\AtMo{\paramx}$ &
 $ \ltlalws (\paramx \limply \lnot \ltlnext\ltlevtly\paramx) $ &
 $ \paramx $ occurs at most once in the trace
\\

$\End{\paramx}$ &
 $ \ltlalws \ltlevtly \paramx $ & 
 The last event of any trace is $ \paramx $
\\


$\AltPrec{\paramy}{\paramx}$ &
 $ \ltlalws \left( \paramx \limply \ltlyday( \lnot\paramx \ltlsince \paramy ) \right) $ &
 Every occurrence of $ \paramx $ requires that $ \paramy $ occurred before, with no recurrence of $ \paramx $ in between
\\

\bottomrule
\end{tabular}
		\end{scriptsize}
	\end{adjustbox}
\end{table}
%
The semantics of a {\Declare} template is given as an {\LTLf} formula. 
Given the free variables (``\emph{parameters}'') $\paramx$ and $\paramy$, e.g., $\AltPrec{\paramy}{\paramx}$ corresponds to $\ltlalws( \paramx \limply \ltlyday (\neg \paramx \ltlsince \paramy) )$, witnessing that for every instant in which $\paramx$ is verified, then a previous instant must verify $\paramy$ without any occurrences of $\paramx$ in between. Hitherto, we will occasionally use an abbreviation for the template name --- {\AltPrecShort{\paramy}{\paramx}}.
\Cref{tab:declare:semantics} shows the {\LTLf} formulae of some templates of the {\Declare} repertoire.
Standard {\Declare} imposes that template parameters be interpreted as single symbols of $\AuAlph$ to build \emph{constraints}.
For example, $\AltPrecShort{\taskb}{\taskc}$ interprets $\paramx$ as $\letterc$ and $\paramy$ as $\letterb$.
Branched {\Declare}~\cite{Pesic/2008:ConstraintbasedWorkflow} comprises the same set of templates of standard {\Declare}, yet allowing the interpretation of parameters as elements of a join-semilattice $\left( \LogAlph, \vee \right)$, i.e., an idempotent commutative semigroup, where $\vee$ is the join-operation~\cite{DiCiccio.etal/IS2016:EfficientDiscoveryTarget}.
We shall use a clausal set-notation whenever a parameter is interpreted as a disjunction of literals. 
For example, $\AltPrecShort{\{\lettera,\letterw\}}{\letterb}$ interprets $\paramx$ as $\letterb$ and $\paramy$ as $\lettera \vee \letterw$: for every $\letterb$ occurring in a trace, a previous instant must have verified $\lettera \vee \letterw$, without $\letterb$ recurring between that instant and the following occurrence of $\letterb$.
The conjunction of a finite set of constraints forms a {\Declare} specification.
In the following, we formalize the above notions.

\begin{definition}
    \label{def:decspec}
    A {\Declare} specification $\DecSpec = (\DecRepertoire,\AuAlph,\CnsSet)$ is a tuple wherein: \quad
    \begin{compactdesc}
        \item[\textnormal{\textit{$\DecRepertoire$}}] is a finite non-empty set of \emph{templates}, or ``\emph{repertoire}'', where each template $\DecTemp(x_1, \ldots, x_m)$ is an {\LTLf} formula parameterized on free variables $x_1, \ldots, x_m$; 
        \item[$\AuAlph \ni \lettera_i $] is a finite non-empty alphabet of \emph{symbols} $\lettera_i$ with $1 \leq i \leq |\Sigma|$, $|\Sigma| \in \mathbb{N}$; 
        \item[$\CnsSet$] is a finite set of \emph{constraints}, namely pairs $(\DecTemp(x_1, \ldots, x_m), \kappa)$ where $\DecTemp(x_1, \ldots, x_m)$ is a template from $\DecRepertoire$, and $\kappa : \{ x_1, \ldots, x_m \} \to 2^{\AuAlph} \setminus \{\}$ is a mapping from every variable $x_i$ to a non-empty, finite set of symbols $\letterA_i = \{ \lettera_{i,1}, \ldots, \lettera_{i,v_i} \} \subseteq \AuAlph$, with $1 \leq i \leq m$ and $1 \leq v_i \leq |\Sigma|$; we denote such a constraint with $\Cns( \letterA_1, \ldots, \letterA_m )$ or equivalently $\Cns(\{ \lettera_{1,1}, \ldots, \lettera_{1,v_1} \}, \ldots, \{ \lettera_{m,1}, \ldots, \lettera_{m,v_m} \})$, omitting curly brackets from the latter form whenever a variable is mapped to a singleton.
    \end{compactdesc}
\end{definition}
\begin{sloppypar}	
\noindent
As an example, consider the following specification: $\DecRepertoire = \{ \AtMo{\paramx},$ $\End{\paramx},$ $\AltPrecShort{\paramy}{\paramx} \}$, $ \AuAlph = \{ \lettera,$ $\letterb,$ $\letterc,$ $\letterd,$ $\lettere,$ $\letterf,$ $\letterg,$ $\letteru,$  $\letterv,$ $\letterw \} $, and 
$ \CnsSet = \{ \AtMo{\lettera} $,
$\End{\letterv}$,
$\AltPrecShort{\lettere}{\letterf}$,
$\AltPrecShort{\{\lettera, \letterw\}}{\letterb} $,
$\AltPrecShort{\letteru}{\{\letterv, \letterw\}} \} $, where, e.g., $\AltPrecShort{\{\lettera, \letterw\}}{\letterb} \} $ is derived from the template $\AltPrecShort{\paramy}{\paramx}$ by mapping $\paramy \mapsto_\kappa \{\lettera, \letterw\}$ and $\paramx \mapsto_\kappa {\letterb}$.
\end{sloppypar}

\begin{sloppypar}
\begin{definition}[Constraint formula, satisfying trace]
\label{def:constraint-formula}
Let $\Cns(\letterA_1, \ldots, \letterA_m)$ be a constraint, whereby
$\letterA_i = \{\lettera_{i,1}, \ldots, \lettera_{i,v_i}\}$
for each $1 \leq i \leq m$. Its \emph{constraint formula}, written $\FormulaOf{\Cns(\letterA_1, \ldots, \letterA_m)}$, is the {\LTLf} formula obtained from the template $\DecTemp(\paramx_1, \ldots, \paramx_m)$ by interpreting $\paramx_i$ as $\left(\lettera_{i,1} \vee \cdots \vee \lettera_{i,v_i}\right)$ for each $1 \leq i \leq m$. A trace $\ltrace$ \emph{satisfies} $\Cns(\letterA_1, \ldots, \letterA_m)$ iff $\ltrace \models \FormulaOf{\Cns(\letterA_1, \ldots, \letterA_m)}$; otherwise, we say that $\ltrace$ \emph{violates} $\Cns(\letterA_1, \ldots, \letterA_m)$.
\end{definition}
%
\noindent
Considering \cref{tab:declare:semantics} and the above example specification, we have that $\FormulaOf{\AltPrecShort{\{\lettera,\letterw\}}{\letterb}} = \ltlalws( \letterb \limply \ltlyday (\neg \letterb \ltlsince (\lettera \vee \letterw)) )$, and $\FormulaOf{\End{\letterv}} = \ltlalws \ltlevtly \letterv$.
Traces
$\langle \lettera, \letterb, \letterc \rangle$,
$\langle \lettera,\letterb,\letterc,\letterf,\letteru,\letterw,\letterb \rangle$, and 
$\langle \lettera,\letterb,\letterc,\lettere,\letterf,\letterg,\letteru,\letterv \rangle$ satisfy $\AltPrecShort{\{\lettera,\letterw\}}{\letterb}$,
while only the third one satisfies
$\End{\letterv}$.

\end{sloppypar}

\begin{sloppypar}
   \begin{definition}[Specification formula, model trace]\label{def:modeltrace}
	A given {\Declare} specification $\DecSpec = (\DecRepertoire,\AuAlph,\CnsSet)$ is logically represented by conjoining its constraint formulae 
    ${\varphi_{\DecSpec} \doteq \bigwedge_{\Cns(\letterA_1, \ldots, \letterA_m) \in \CnsSet}\left( 
	\FormulaOf{\Cns(\letterA_1, \ldots, \letterA_m)} \right)}$. A trace is a \emph{model trace} for the specification, $\ltrace \models \DecSpec$, iff $\ltrace \models \varphi_{\DecSpec}$, i.e., it satisfies the conjunction of all the constraint formulae, $\ltrace \models \FormulaOf{\Cns(\letterA_1, \ldots, \letterA_m)}$ for each $\Cns(\letterA_1, \ldots, \letterA_m) \in \CnsSet$.
    \end{definition} 
\end{sloppypar}

\noindent
The specification formula of the above example is \begin{align*}
&\left(\ltlalws (\lettera \limply \lnot \ltlnext\ltlevtly\lettera)\right) \land\\
&\left(\ltlalws \ltlevtly \letterv \right) \land \\
&\left( \ltlalws \left( \letterf \limply \ltlyday( \lnot\letterf \ltlsince \lettere ) \right) \right) \land 
\left( \ltlalws \left( \letterb \limply \ltlyday( \lnot\letterb \ltlsince \left( \lettera \lor \letterw \right) ) \right) \right) \land\\
&\left( \ltlalws \left( \left( \letterv \lor \letterw \right) \limply \ltlyday( \lnot\left( \letterv \lor \letterw  \right) \ltlsince \letteru ) \right) \right)\end{align*}
$\langle \lettera,\letterb,\letterc,\lettere,\letterf,\letterg,\letteru,\letterv \rangle$ is a model trace for it, unlike $\langle \lettera, \letterb, \letterc \rangle$
or
$\langle \lettera,\letterb,\letterc,\letterf,\letteru,\letterw,\letterb \rangle$.

Leveraging the techniques mentioned at the end of \cref{sec:fsa} and the above definition, we can create an FSA that accepts all and only the traces of a single {\Declare} formula $\varphi$ and of a whole specification {\DecSpec}.

\begin{definition}[Constraint and specification FSA]\label{def:specification-fsa}
	Let $\varphi_1, \ldots, \varphi_{|\CnsSet|}$ be the constraint formulae of a process specification {\DecSpec}. A \emph{constraint automaton} $\Au_{\varphi_i}$ is an \gls{fsa} that accepts all and only those traces that satisfy $\varphi_i$~\cite{DiCiccio.etal/IS2017:ResolvingInconsistenciesRedundanciesDeclare} with $1\leq i \leq |\CnsSet|$. 
	The product automaton $\Au_{\varphi_1} \times \cdots \times \Au_{\varphi_{|\CnsSet|}}$ is the \emph{specification FSA}, recognizing all and only the traces satisfying {\DecSpec}.
\end{definition}

\noindent
\Cref{fig:constraint:automata:atmoone:inst,fig:constraint:automata:altprec:inst,fig:constraint:automata:end:inst} show the automata of constraints \AtMo{\lettera}, $\AltPrecShort{\letteru}{\{\letterv,\letterw\}}$, and \End{\letterv}, respectively. \Cref{fig:constraint:automata:product:inst} depicts the FSA of a specification consisting of \End{\letterv} and $\AltPrecShort{\letteru}{\{\letterv,\letterw\}}$. Notice that the accepting state cannot be reached from $\AuSt_2$ in \cref{fig:constraint:automata:atmoone:inst,fig:constraint:automata:altprec:inst}. Instead, the FSA in \cref{fig:constraint:automata:product:inst} has no such trap states due to trimming.

Aside from keeping the FSA's language unchanged, trimming caters for structural compatibility with the state space representation of Workflow nets, which we discuss next.

\subsection{Workflow nets}
\label{sec:wf-net}
\begin{figure}[tb]
	\centering
	\resizebox{\textwidth}{!}{%
		\tikzset{
    -|/.style={to path={-| (\tikztotarget)}},
    |-/.style={to path={|- (\tikztotarget)}},
}
\begin{tikzpicture}[>=stealth',x=1cm,y=0.5cm,bend angle=90]
\begin{scriptsize}
\tikzstyle{transition}=[rectangle,thick,draw=black!75,minimum height=2em,minimum width=2em,text width=1em,align=center]
\tikzstyle{silenttransition}=[rectangle,thick,fill=black,minimum height=2em]
\tikzstyle{outputplace}=[double,draw,circle,minimum height=2em]

\node [place,tokens=1] (0) at (  0,  0)  [label={below:$p_0$},label={above:$\InPl$}] {}; 
\node [place] (1) at ( +2, 0)   [label={$p_1$}]          {};
\node [place] (2) at ( +4, 0)   [label={below:$p_2$}]          {};
\node [place] (3) at ( +6, 0)   [label={below:$p_3$}]          {};
\node [place] (4) at ( +8, +1)  [label={below:$p_4$}]          {};
\node [place] (5) at ( +8, -1)  [label={below:$p_5$}]          {};
\node [place] (6) at ( +10, +1) [label={below:$p_6$}]          {};
\node [place] (7) at ( +10, -1) [label={below:$p_7$}]          {};
\node [place] (8) at ( +12, 0)  [label={$p_8$}]           {};
\node [outputplace] (9) at ( +14, 0) [label={below:$p_9$},label={above:$\OutPl$}] {};

\node [transition] (ta) at ( +1,  0) {$\Tx{\taska}$}
edge [pre]                 (0)
edge [post]                (1);
\node [transition] (tb) at ( +3, 0) {$\Tx{\taskb}$}
edge [pre]                 (1)
edge [post]                (2);
\node [transition] (tc) at ( +5, +1) {$\Tx{\taskc}$}
edge [pre]                 (2)
edge [post]                (3);
\node [transition] (td) at ( +5, -1) {$\Tx{\taskd}$}
edge [pre]                 (2)
edge [post]                (3);
\node [transition] (te) at ( +7, 0) {$\Tx{\taske}$}
edge [pre]                 (3)
edge [post]                (4)
edge [post]                (5);
\node [transition] (tf) at ( +9, +1) {$\Tx{\taskf}$}
edge [pre]                 (4)
edge [post]                (6);
\node [transition] (tg) at ( +9, -1) {$\Tx{\taskg}$}
edge [pre]                 (5)
edge [post]                (7);
\node [transition] (tu) at (+11,  0) {$\Tx{\tasku}$}
edge [pre]                 (6)
edge [pre]                 (7)
edge [post]                (8);
\node [transition] (tv) at ( +13, 0) {$\Tx{\taskv}$}
edge [pre]                 (8)
edge [post]                (9);
\node [transition] (tr) at ( +11, -2.5) {$\Tx{\taskw}$}
edge [pre,-|]                 (8)
edge [post,-|]                (1);

\end{scriptsize}
\end{tikzpicture}
	}
	\vspace{-4ex}\caption[A Workflow net]{A Workflow net}
	\label{fig:wnexample}
\end{figure}
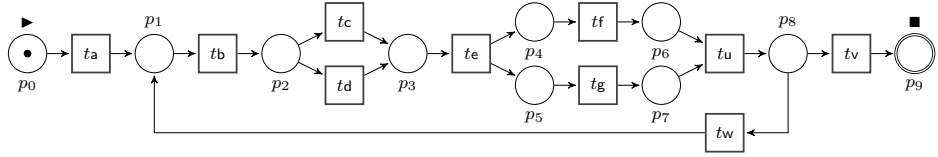
A Workflow net (see, e.g., \cref{fig:wnexample}) is a renowned subclass of Petri nets suitable for the formal representation of imperative process models~\cite{DBLP:journals/jcsc/Aalst98}. 

\begin{definition}[Petri net]\label{def:petrinet}
    A place/transition net~\cite{DeselReisig/ACPN1998:PlaceTransitionPetriNets} (henceforth, \emph{Petri net}) is a bipartite graph $ ( \Places, \Transitions, \FlowRel) $, where $\Places$ (the finite set of ``\emph{places}'') and $\Transitions$ (the finite set of ``\emph{transitions}'') constitute the nodes ($\Places \cap \Transitions = \emptyset$),
    and the \emph{flow relation} $\FlowRel \subseteq ( \Places \times \Transitions \uplus \Transitions \times \Places )$ defines the edges. 
\end{definition}

\noindent 
Given a place $ \Place \in \Places $, we shall denote the sets $ \{ \Transition \mid (\Transition, \Place) \in \FlowRel \} $ and $ \{ \Transition \mid (\Place, \Transition) \in \FlowRel \}$ with $ \Pre{\Place} $  (``\emph{preset}'') and $ \Post\Place $ (``\emph{postset}''),  respectively. For example, in \cref{fig:wnexample}, $\Pre{\Place_3}=\{\Tx\taskc\}$ and $\Post{\Place_3}=\{\Tx\taske,\Tx\taskf\}$. 

\begin{definition}[Workflow net]\label{def:workflownet}
	A Workflow net $\Wn = ( \Places, \Transitions, \FlowRel )$ is a Petri net such that:
	\begin{compactenum}
		\item There is a unique place (``\emph{initial place}'', $ \InPl \in \Places$) such that its preset is empty;
		\item There is a unique place (``\emph{output place}'', $ \OutPl \in \Places$) such that its postset is empty;
		\item Every place $\Place \in \Places$ and transition $\Transition \in \Transitions$ is on a path of the underlying graph from $\InPl$ to $\OutPl$. 
	\end{compactenum}
\end{definition}
\noindent
We remark that we operate with full knowledge of the imperative model's structure, treated as a white box. Therefore, we directly focus on transitions rather than on their labels here.

In Petri and Workflow nets, places can be \emph{marked} with tokens, intuitively representing resources that are processed by the transitions succeeding them in the net. In \cref{fig:wnexample}, a token is graphically depicted as a solid circle (see $\Place_0$ in the figure).
The state of a net is defined by the distribution of tokens over places. This is formalized with the notion of \emph{marking}, a function mapping each place to the number of tokens in it. The net's state changes with the consumption and production of tokens caused by the execution (``\emph{firing}'') of transitions.
%
%

\begin{definition}[Marking and firing]
	Let $\Wn = ( \Places, \Transitions, \FlowRel )$ be a Workflow net.
	A \emph{marking} is a function $\Marking : \Places \to \mathbb{N} \cup \{0\}$.
	The \emph{initial marking} $\Marking_0$ of $\Wn$ maps $\InPl$ to $1$ and any other $\Place \in \Places \setminus \{\InPl\}$ to $0$.
	A marking $\Marking$ of $\Wn$ is \emph{final} if $\Marking(\OutPl) > 0$.
	A marking \emph{enables} a transition $\Transition \in \Transitions$ iff $ \Marking(\Place) > 0$ for all places $ \Place $ such that $ \Transition \in \Post\Place $.
	An enabled transition can \emph{fire}, i.e., turn a marking $\Marking$ into $\Marking'$ (in symbols, $\Fires{\Marking}{\Transition}{\Marking'}$), according to the following rule: For each place $\Place \in \Places$,
	$\Marking'(\Place) = \Marking(\Place) + 1$ if $\Transition \in \Pre\Place$;
	$\Marking'(\Place) = \Marking(\Place) - 1$ if $\Transition \in \Pre\Place$;
	otherwise, $\Marking'(\Place) = \Marking(\Place)$.
\end{definition}   

\noindent
In \cref{fig:wnexample}, e.g., the initial marking enables $\Tx\taska$. Denoting markings with a multi-set notation,
$
\Fires{\left\{ \Place_0 \right\}}{\Tx\taska}{\left\{ \Place_1 \right\}}
$.
Subsequently, $\Tx\taskb$ gets enabled. After firing $\Tx\taskb$, and $\Tx\taskc$ get enabled. With Petri and Workflow nets, interleaving semantics are adopted, thus only one transition can fire per timestep, thus the firing of $\Tx\taskb$ and $\Tx\taskc$ are mutually exclusive in that state.
\begin{definition}[Firing sequence and run]
    Given a Workflow net $\Wn = ( \Places, \Transitions, \FlowRel )$, a
    (finite) \emph{firing sequence} 
    $ \PnFireSeq $ is $\langle \rangle$ or a sequence of transitions $ \langle \Transition_1, \ldots, \Transition_n \rangle $ such that, for any index $ 1 \leq i \leq n $ with $ n \in \mathbb{N}$, $\Transition_i \in \Transitions$:
    \begin{inparaenum}[(a)]
       \item The $i$-th marking enables the $i$-th transition;
       \item The $i + 1$-th marking $\Marking_{i + 1}$ is such that $\Fires{\Marking_i}{\Transition_{i}}{\Marking_{i + 1}} $.
    \end{inparaenum}
    A marking $\Marking'$ is \emph{reachable} in $\Wn$ if there exists a firing sequence $ \PnFireSeq $ leading from the initial marking $\Marking_0$ to $\Marking'$ (in symbols, $\Fires{\Marking_0}{\PnFireSeq}{\Marking'} $).
    A firing sequence leading from $\Marking_0$ to a final marking is a \emph{run}.
\end{definition}

\noindent
Given a workflow net $\Wn$, we will use $\ReachMarkings{\Wn}$ to denote the set of markings that can be reached from its initial marking $\Marking_0$.

Runs of the Workflow net in \cref{fig:wnexample} include
$\langle \Tx\taska, \Tx\taskb, \Tx\taskc, \Tx\taske, \Tx\taskf, \Tx\taskg, \Tx\tasku, \Tx\taskv \rangle$
and
$\langle \Tx\taska, \Tx\taskb, \Tx\taskd, \Tx\taske, \Tx\taskf, \Tx\taskg, $ 
$\Tx\tasku, \Tx\taskw, \Tx\taskc, \Tx\taske, \Tx\taskf, \Tx\taskg, \Tx\tasku, \Tx\taskv \rangle$.
Any prefix of the first run of length \num{7} or less is a firing sequence from the initial marking $\left\{ \Place_0 \right\}$ but not a run.

In this paper, we assume that Workflow nets enjoy the following properties.
\begin{definition}[Soundness and safety of a Workflow net]
    Let $\Wn = ( \Places, \Transitions, \FlowRel )$ be a Workflow net. 
    \Wn is \emph{$k$-bounded} if the number of tokens assigned by any reachable marking $\Marking'$ to any place $p \in P$ is such that $\Marking'(p) \leq k$. A 1-bounded Workflow net is \emph{safe}.
    \Wn is \emph{sound} iff it enjoys the following properties:
    \begin{inparadesc}
        \item[Option to complete:] from any marking $\Marking$ it is possible to reach the final marking;
        \item[Proper completion:] if a reachable marking $\Marking$ is such that $\Marking(\OutPl) > 0$, then $\Marking$ is the final marking;
        \item[No dead transitions:] for any transition $\Transition \in \Transitions$, there exists a reachable marking $\Marking$ such that $\Transition$ is enabled by $\Marking$.
    \end{inparadesc}

\end{definition}

Safe and sound  Workflow nets (like the one depicted in \cref{fig:wnexample}) are a superclass of sound S-coverable nets, which in turn subsume safe and sound free-choice and well-structured nets~\cite{Aalst/1996:StructuralCharacterizationsSoundWfNs}.
These structural characteristics are widely recognized as recommendable in process management~\cite{DBLP:journals/jcsc/Aalst98} and underpin well-formed business process diagrams~\cite{Ouyang.etal/TOSEM2009:BusinessProcessModels2ProcessOrientedSwSystems}.
Notice that, given a \emph{safe} net, all markings that are reachable from the initial one are such that each place can be marked with at most one token. Also, the final marking of a sound workflow net is $\{\OutPl\}$.
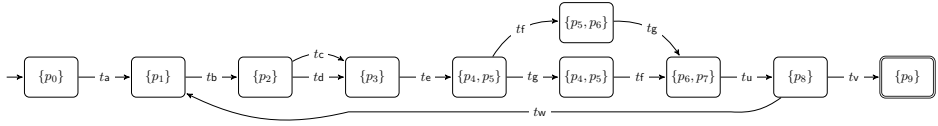
\begin{figure}[tb]
	\centering
	\resizebox{\textwidth}{!}{
		\begin{tikzpicture}[->,>=stealth',shorten >=1pt,node distance=8em,
                    thick,
                    initial text = {},
                    state/.style = {
                        rounded corners,
                        minimum width=4em,
                        minimum height=3em,
                        draw,
                        inner sep=1mm,
                    },]
      \node[state,initial] (0) { $\left\{\Place_0\right\}$};
      \node[state] (1) [right of=0] {$\left\{\Place_1\right\}$};
      \node[state] (2) [right of=1] {$\left\{\Place_2\right\}$};
      \node[state] (3) [right of=2] {$\left\{\Place_3\right\}$};
      \node[state] (45) [right of=3] {$\left\{\Place_4, \Place_5\right\}$};
      \node[state] (47) [right of=45] {$\left\{\Place_4, \Place_5\right\}$};
      \node[state] (56) [above =1em of 47] {$\left\{\Place_5, \Place_6\right\}$};
      \node[state] (67) [right of=47] {$\left\{\Place_6, \Place_7\right\}$};
      \node[state] (8) [right of=67] {$\left\{\Place_8\right\}$};
      \node[accepting, state] (9) [right of=8] {$\left\{\Place_{9}\right\}$};
      
      \path
         (0) edge node [midway, fill=white] {$\Tx{\taska}$} (1)
         (1) edge node [midway, fill=white] {$\Tx{\taskb}$} (2)
         (2) edge node [midway, fill=white] {$\Tx{\taskd}$} (3)
         (2) edge [bend left] node [midway, fill=white] {$\Tx{\taskc}$} (3)
         (3) edge node [midway, fill=white] {$\Tx{\taske}$} (45)
        
        (45) edge node [midway, fill=white] {$\Tx{\taskg}$} (47)
        (45) edge [bend left] node [midway, fill=white] {$\Tx{\taskf}$} (56)
        (56) edge [bend left] node [midway, fill=white] {$\Tx{\taskg}$} (67)
        (47) edge node [midway, fill=white] {$\Tx{\taskf}$} (67)        
        
        (67) edge node [midway, fill=white] {$\Tx{\tasku}$} (8)
         
                
         (8) edge node [midway, fill=white] {$\Tx{\taskv}$} (9);
         
              	\draw[->]
         (8) to [bend left=20] ++ (-2,-1) to node[midway, fill=white] {$\Tx{\taskw}$} ++ (-11, 0) to [bend left=20] (1);

%
%
%
%
%
%
%
%
%
\end{tikzpicture}
	}
	\vspace{-4ex}
	\caption[Reachability graph]{Reachability graph derived from the Workflow net in \cref{fig:wnexample}.} 
	\label{fig:reachabilitygraph}
\end{figure}

The state space of $k$-bounded Petri nets can be represented in the form of a deterministic labeled transition system that go under the name of \emph{reachability graph}~\cite{DeselReisig/ACPN1998:PlaceTransitionPetriNets}. Safe and sound Workflow nets have a given initial marking and one final marking. We can thus endow the state representation with these characteristics and reinterpret the known concept of reachability graph as a finite state automaton.
\begin{definition}[Reachability FSA]\label{def:reachability-graph}
    Given a sound and safe Workflow net $\Wn$, the reachability FSA
    $\Au^{\Wn} = \left( \AuStSet^{\Wn}, \AuStInit^{\Wn}, \AuStAcc^{\Wn}, \AuAlph^{\Wn}, \AuTrns^{\Wn} \right) $ is a finite state automaton where:
    \begin{compactdesc}
        \item[$\AuStSet^{\Wn} = \ReachMarkings{\Wn}$], i.e., the set of states is the set of reachable markings in $\Wn$;
        \item[$\AuStInit^{\Wn} = \{\InPl\}$], i.e., the initial state is the initial marking of $\Wn$;
        \item[$\AuStAcc^{\Wn} = \{\{\OutPl\}\}$], i.e., the accepting state set is a singleton containing the final marking of $\Wn$; 
        \item[$\AuAlph^{\Wn} = \Transitions$], i.e., the alphabet is the set of transitions of $\Wn$;
        \item[$\AuTrns^{\Wn}$] is s.t.\ $\AuTrns(\Marking, \Transition) = \Marking'$ iff $\Fires{\Marking}{\Transition}{\Marking'}$ for every transition $t$ and reachable $\Marking,\Marking'$ in \Wn.
    \end{compactdesc}
\end{definition}
%
%

\noindent \Cref{fig:reachabilitygraph} depicts the reachability FSA of the Workflow net in \cref{fig:wnexample}. 

When dealing with accepting traces and languages, working with trimmed or non-trimmed FSAs is equivalent. This is not the case when we structurally relate the FSA of a {\Declare} specification with the reachability FSA of a Workflow net. That FSA is indeed constructed, state-by-state, considering only enabled transitions, which globally yields that it is trimmed by design. Therefore we operate with trimmed FSAs for this comparison.

\section{Synthesis of LTL$_f$ specifications from Workflow nets}
\label{sec:algo}

\begin{algorithm2e}[tb]
	\caption{\small Wizard’s guide to synthesize {\Declare} specifications from Workflow nets}
	\label{alg:imperative-to-declarative}
%
\scriptsize
\SetKwComment{Comment}{\#~}{}
\DontPrintSemicolon \SetAlgoVlined 
\KwIn{%
	$\Wn = \left( \Places, \Transitions, \FlowRel \right)$, a workflow net;%
}%
\KwOut{%
	$\DecSpec = \left( \DecRepertoire, \ProActvts, \CnsSet \right)$, a declarative process specification
}
\BlankLine
$\AuAlph \gets \Transitions; \quad \CnsSet \gets \{\}; \quad $ \Comment*{Assign the alphabet with the transition set, and initialize the constraint set}\label{alg:imp2dec:alph}
$\DecSpec \gets ( \{\AtMo{\paramx}, \End{\paramx}, \AltPrecShort{\paramx}{\paramy}\}, \, \AuAlph, \, \CnsSet )$ \Comment*{Initialize{\DecSpec} including templates}\label{alg:imp2dec:templates}
\ForEach(\Comment*[f]{Visit all places in \Wn}){$ \Place \in \Places$}{%
	\lIf(\Comment*[f]{Add the \AltPrecShort{\Pre\Place}{\Post\Place} constraint {\StarOne}}){$\Pre\Place \neq \emptyset$ and $\Post\Place \neq \emptyset$}{%
		$\CnsSet \gets \CnsSet \cup \left\{ \AltPrecShort{\Pre\Place}{\Post\Place} \right\}$}\label{alg:imp2dec:altprec}%
	\lElseIf(\Comment*[f]{Add the \AtMo{\Post\Place} constraint {\StarTwo}}){$\Pre\Place = \emptyset$}{%
		$\CnsSet \gets \CnsSet \cup \left\{ \AtMo{\Post\Place} \right\}\label{alg:imp2dec:atmo}$%
	}%
	\lElseIf(\Comment*[f]{Add the \End{\Pre\Place} constraint {\StarThree}}){$ \Post\Place = \emptyset$}{%
		$\CnsSet \gets \CnsSet \cup \left\{ \End{\Pre\Place} \right\} $\label{alg:imp2dec:end}%
	}
}
\end{algorithm2e}
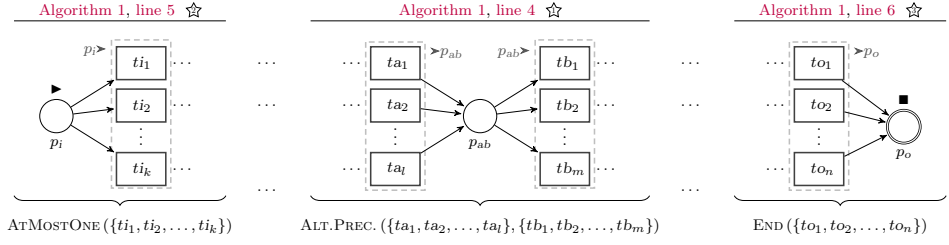
\begin{figure}[tb]
	\vspace{-2ex}
	\resizebox{\textwidth}{!}{%
		\begin{tikzpicture}[>=stealth',x=4em,y=2em]
	\begin{scriptsize}
		\tikzstyle{transition}=[rectangle,thick,draw=black!75,minimum height=2em,minimum width=3em,text width=1em,align=center]
		\tikzstyle{outputplace}=[double,draw,circle,minimum height=2em]

            \node [align=center] (atmo1step) at (+0.75, +3.25) {\Crefalgln[alg:imperative-to-declarative]{alg:imp2dec:atmo}~~\StarTwo};
            \draw [solid] (-0.5,+3) -- (2,+3);
        
		\node [place] (pi) at (0, +0.75)  [label={below:$\Place_\letteri$},label={above:$\InPl$}] {};

		\draw[color=gray!50, outer sep=1.8em, thick, densely dashed] (0.65,2.55) rectangle node[anchor=south east,yshift=1.4em,text=black!66]{$\Post{\Place_\letteri}$} ++(0.7,-3.5);
				
		\node [transition] (ti1) at ( +1,  +2) {$\Tx{\letteri_1}$}
		 edge [pre]                 (pi);
		\node [] (ti1dots) at (+1.5, +2) [] {$\cdots$};
		
		\node [transition] (ti2) at ( +1,  +1) {$\Tx{\letteri_2}$}
		 edge [pre]                 (pi);
		\node [] (ti2dots) at (+1.5, +1) [] {$\cdots$};
		
		\node [] (tidots) at ( +1,  0.4) {$\vdots$};
				
		\node [transition] (tim) at ( +1,  -0.5) {$\Tx{\letteri_k}$}
		 edge [pre]                 (pi);
		\node [] (timdots) at (+1.5, -0.5) [] {$\cdots$};
		
		\draw [decorate,decoration={brace,amplitude=5pt,mirror,raise=4ex}]
		(-0.5,-0.5) -- (2,-0.5) node[midway,yshift=-3.5em]{$\AtMo{ \{ \Tx{\letteri_1}, \Tx{\letteri_2}, \ldots, \Tx{\letteri_k} \} }$};

		\node [] (t1tdots) at (+2.5, +2) [] {$\cdots$};
		\node [] (t1tdots) at (+2.5, +1) [] {$\cdots$};
		\node [] (t1tdots) at (+2.5, -1) [] {$\cdots$};

            \node [align=center] (altprecstep) at (+5, +3.25) {\Crefalgln[alg:imperative-to-declarative]{alg:imp2dec:altprec}~~\StarOne};
            \draw [solid] (3,+3) -- (7,+3);

        \node [place] (p) at (+5, +0.75) [label={below:$\Place_{\lettera\letterb}$}] { };
		
		\draw[color=gray!50, outer sep=1.8em, thick, densely dashed] (3.65,2.5) rectangle node[anchor=south west,yshift=1.4em,text=black!66]{$\Pre{\Place_{\lettera\letterb}}$} ++(0.7,-3.5);
		
		\node [transition] (1t) at ( +4,  +2) {$\Tx{\lettera_1}$}
		edge [post]                 (p);
		\node [] (1tdots) at (+3.5, +2) [] {$\cdots$};
		
		\node [transition] (2t) at ( +4,  +1) {$\Tx{\lettera_2}$}
		edge [post]                 (p);
		\node [] (2tdots) at (+3.5, +1) [] {$\cdots$};
		
		\node [] (tdots) at ( +4,  0.4) {$\vdots$};
		
		\node [transition] (nt) at ( +4,  -0.5) {$\Tx{\lettera_l}$}
		edge [post]                 (p);
		\node [] (ntdots) at (+3.5, -0.5) [] {$\cdots$};
		
		\draw[color=gray!50, outer sep=1.8em, thick, densely dashed] (5.65,2.55) rectangle node[anchor=south east,yshift=1.3em,text=black!66]{$\Post{\Place_{\lettera\letterb}}$} ++(0.7,-3.5);

		\node [transition] (t1) at ( +6,  +2) {$\Tx{\letterb_1}$}
		edge [pre]                 (p);
		\node [] (t1dots) at (+6.5, +2) [] {$\cdots$};
		
		\node [transition] (t2) at ( +6,  +1) {$\Tx{\letterb_2}$}
		edge [pre]                 (p);
		\node [] (t2dots) at (+6.5, +1) [] {$\cdots$};
		
		\node [] (ttdots) at ( +6,  0.4) {$\vdots$};
		
		\node [transition] (tn) at ( +6,  -0.5) {$\Tx{\letterb_m}$}
		edge [pre]                 (p);
		\node [] (tndots) at (+6.5, -0.5) [] {$\cdots$};

		\draw [decorate,decoration={brace,amplitude=5pt,mirror,raise=4ex}]
		(3,-0.5) -- (7,-0.5) node[midway,yshift=-3.5em]{$\AltPrecShort{ \{ \Tx{\lettera_1}, \Tx{\lettera_2}, \ldots, \Tx{\lettera_l} \} }{ \{ \Tx{\letterb_1}, \Tx{\letterb_2}, \ldots, \Tx{\letterb_m} \} }$};

		\node [] (ttodots) at (+7.5, +2) [] {$\cdots$};
		\node [] (ttodots) at (+7.5, +1) [] {$\cdots$};
		\node [] (ttodots) at (+7.5, -1) [] {$\cdots$};

            \node [align=center] (altprecstep) at (+9.25, +3.25) {\Crefalgln[alg:imperative-to-declarative]{alg:imp2dec:end}~~\StarThree};
            \draw [solid] (8,+3) -- (10.5,+3);

		\node [outputplace] (po) at ( +10, 0.5)  [label={below:$\Place_\lettero$},label={above:$\OutPl$}] {};

		\draw[color=gray!50, outer sep=1.8em, thick, densely dashed] (8.65,2.5) rectangle node[anchor=south west,yshift=1.4em,text=black!66]{$\Pre{\Place_o}$} ++(0.7,-3.5);

		\node [transition] (ti1) at ( +9,  +2) {$\Tx{\lettero_1}$}
		edge [post]                 (po);
		\node [] (ti1dots) at (+8.5, +2) [] {$\cdots$};
		
		\node [transition] (ti2) at ( +9,  +1) {$\Tx{\lettero_2}$}
		edge [post]                 (po);
		\node [] (ti2dots) at (+8.5, +1) [] {$\cdots$};
		
		\node [] (tidots) at ( +9,  0.4) {$\vdots$};
		
		\node [transition] (tim) at ( +9,  -0.5) {$\Tx{\lettero_n}$}
		edge [post]                 (po);
		\node [] (timdots) at (+8.5, -0.5) [] {$\cdots$};
		
		\draw [decorate,decoration={brace,amplitude=5pt,mirror,raise=4ex}]
		(8,-0.5) -- (10.5,-0.5) node[midway,yshift=-3.5em]{$\End{ \{ \Tx{\lettero_1}, \Tx{\lettero_2}, \ldots, \Tx{\lettero_n} \} }$};
		
	\end{scriptsize}
\end{tikzpicture}%
	}
	\vspace{-4ex}\caption[A graphical sketch of the algorithm]{A graphical sketch of the execution of \cref{alg:imperative-to-declarative}} %
	\label{fig:imperative-to-declarative}
\end{figure}	

In this section, we outline the algorithm (including the three spells to cast: \StarOne, \StarTwo, and \StarThree) to synthesize a {\Declare} specification {\DecSpec} from a given input safe and sound Workflow net {\Wn}
ensuring behavioral equivalence between them. \Cref{alg:imperative-to-declarative} illustrates the transformation process.
The algorithm initializes {\DecSpec} by assigning its alphabet with the transition set of {\Wn} (\crefalgln[alg:imperative-to-declarative]{alg:imp2dec:alph}). Given the Workflow net in~\cref{fig:wnexample}, e.g., $\AuAlph$ gets $\{\Tx\taska,\ldots,\Tx\taskg,\Tx\tasku,\Tx\taskv,\Tx\taskw\}$. Then, it sets the three (necessary) templates that will be used (\crefalgln{alg:imp2dec:templates}): $\AtMo{\paramx}, \End{\paramx},$ and $\AltPrecShort{\paramy}{\paramx}$. A cycle begins to visit all places in {\Wn} and update {\DecSpec} by including a new constraint per place. 
\Cref{fig:imperative-to-declarative} graphically sketches this passage, which casts the three spells as follows:
\begin{inparaitem}
	\item[\StarThree] If $\Place$ is the output place as $\Post\Place$ is empty, $\End{\Pre\Place}$ is included (\crefalgln{alg:imp2dec:end});
	\item[\StarTwo] If $\Place$ is the input place as $\Pre\Place$ is empty, $\AtMo{\Post\Place}$ is added (\crefalgln{alg:imp2dec:atmo});
	\item[\StarOne] Otherwise, $\AltPrecShort{\Pre\Place}{\Post\Place}$ becomes one of the constraints in {\DecSpec} (\crefalgln{alg:imp2dec:altprec}).
\end{inparaitem}
Intuitively, the rationale is that:
\begin{inparaitem}
	\item[\StarOne] Every time a transition in the postset of $\Place$ fires, it is necessary that at least one of the transitions in the preset of $\Place$ fired before and that no transition in the postset of $\Place$ has fired since then;
	\item[\StarTwo] Any of the transitions in the postset of $\InPl$ will start the run and will not repeat afterwards (because no firing can assign $\InPl$ with a token again by definition); 
	\item[\StarThree] Every run must terminate with one of the transitions in the preset of $\OutPl$.
\end{inparaitem}

\begin{table}[tb]
    \centering
    \caption[Specification generated from the Workflow net]{{\Declare} specification generated from the Workflow net in \cref{fig:wnexample}\vspace{-2ex}}
    \label{tab:constraints}
    \resizebox{\columnwidth}{!}{%
\begin{tabular}{lllll}
    \toprule
    $\AtMo{\Tx{\taska}}$& 
    $\End{\Tx{\taskv}}$&
    $\AltPrecShort{\{\Tx{\taska},\Tx{\taskw}\}}{\Tx{\taskb}}$&
    $\AltPrecShort{\Tx{\taskb}}{\{\Tx{\taskd},\Tx{\taskc}\}}$& 
    $\AltPrecShort{\{\Tx{\taskd},\Tx{\taskc}\}}{\Tx{\taske}}$ 
    \\
    $\AltPrecShort{\Tx{\taske}}{\Tx{\taskf}}$&
    $\AltPrecShort{\Tx{\taske}}{\Tx{\taskg}}$& 
    $\AltPrecShort{\Tx{\taskf}}{\Tx{\tasku}}$& 
    $\AltPrecShort{\Tx{\taskg}}{\Tx{\tasku}}$&
    $\AltPrecShort{\Tx{\tasku}}{\{\Tx{\taskv},\Tx{\taskw}\}}$ \\
    \bottomrule
\end{tabular}
}
\end{table}
\Cref{tab:constraints} shows the constraints that are generated by our algorithm if the Workflow net in \cref{fig:wnexample} is fed in input. It is noteworthy to analyze in particular the non-trivial behavior entailed by constraints that stem from the parsing of places that begin or end cycles like $\Place_8$ and $\Place_1$ in \cref{fig:wnexample}. From the former we derive $\AltPrecShort{\Tx\tasku}{\{\Tx\taskv,\Tx\taskw\}}$. It states that before \Tx{\taskv} or \Tx{\taskw}, \Tx{\tasku} must occur. Also, \emph{neither} \Tx{\taskv} nor \Tx{\taskw} car recur until \Tx{\tasku} is repeated. As a consequence, an \emph{exclusive} choice between \Tx{\taskv} and \Tx{\taskw} is enforced cyclically for each recurrence of \Tx{\tasku}. Dually, with $\AltPrecShort{\{\Tx{\taska},\Tx{\taskw}\}}{\Tx{\taskb}}$ (generated by fetching the pre- and post-sets of $\Place_1$) we demand that \emph{each} occurrence of \Tx{\taskb} follows \Tx{\taska} or \Tx{\taskw}. From \cref{tab:constraints} we notice that \Tx{\taska} can occur only once (\AtMo{\Tx\taska}), thus the subsequent recurrences of \Tx{\taskb} are bound to \Tx{\taskw}.
Given the construction in \cref{alg:imperative-to-declarative}, it is clear that the semantics of the resulting {\Declare} specification $\DecSpec$ is an {\LTLf} formula, traces of which are finite sequences of transitions of the input Workflow net \Wn. 
Notice that the mapping of the transitions of \Wn to labels (as usual in a process modeling context) can be treated as a post-hoc refinement of \Wn and equivalently of {\DecSpec}: Assuming that $\Transition_1$ maps to label $\taskz$, e.g., the occurrence of transition $\Transition_1$ will emit $\taskz$ regardless of the underlying behavioral representation.

\medskip
As established in the beginning of this section, our goal is to now show the behavioral equivalence between the {\Declare} specification given in output by \cref{alg:imperative-to-declarative} and the input safe and sound Workflow net.
To this end, we use the following notion of bisimilarity.
\begin{definition}[\small{Bisimilarity of Workflow nets and {\Declare} specifications}]
    A safe and sound Workflow net $\Wn$ is \emph{bisimilar} to a {\Declare} specification $\DecSpec$ if and only if the reachability FSA of $\Wn$ (as per \cref{def:reachability-graph}) is bisimilar to the specification FSA of $\DecSpec$ (as per \cref{def:specification-fsa}). 
\end{definition}

Given
\begin{iiilist}
	\item this notion of bisimilarity, and 
	\item \cref{obs:deterministic:fsa:bisimulation:language:equivalence},
\end{iiilist}
it suffices to show that the two automata accept the same language to prove our claim. This, in turn, means that the {\Declare} specification returned by \cref{alg:imperative-to-declarative} accepts all and only the runs of the input safe and sound Workflow net. We now proceed to formally express our claim.

\begin{theorem}\label{thm:trace-equivalence}
    Given a safe and sound Workflow net {\Wn}, \cref{alg:imperative-to-declarative} returns a {\Declare} specification $\DecSpec$ such that: 
    \begin{iiilist}
	    \item any run of {\Wn} satisfies $\DecSpec$, and 
		\item any trace satisfying $\DecSpec$ is a run of {\Wn}.
    \end{iiilist}
\end{theorem}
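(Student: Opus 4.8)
The plan is to prove the two inclusions the theorem asserts — \emph{(i)} every run of $\Wn$ satisfies $\DecSpec$, and \emph{(ii)} every trace satisfying $\DecSpec$ is a run of $\Wn$ — which, together with \cref{obs:deterministic:fsa:bisimulation:language:equivalence} and the reduction already noted above, yields bisimilarity of the reachability FSA (\cref{def:reachability-graph}) and the specification FSA (\cref{def:specification-fsa}). Both directions rest on two facts about $\Wn$: by \emph{safety}, every reachable marking assigns at most one token to each place, hence can be identified with a subset of $\Places$; by \emph{soundness} (proper completion), $\{\OutPl\}$ is the only reachable marking that marks $\OutPl$, and no transition is enabled there, because $\Post\OutPl=\emptyset$ while every transition has a non-empty preset (it lies on a path from $\InPl$ to $\OutPl$). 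I also assume, harmlessly for safe sound workflow nets, that $\Pre\Place\cap\Post\Place=\emptyset$ for every place (no self-loop transitions), and — following the paper's convention — write a set of transitions $X$ also for the disjunction $\bigvee_{\Transition\in X}\Transition$ inside {\LTLf} formulae.

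\emph{Direction (i).} Let $\PnFireSeq=\langle\Transition_1,\dots,\Transition_n\rangle$ be a run with markings $\Marking_0,\dots,\Marking_n$ such that $\Fires{\Marking_{i-1}}{\Transition_i}{\Marking_i}$ for $1\le i\le n$ and $\Marking_n=\{\OutPl\}$. For the constraint $\AtMo{\Post\InPl}$ from spell~\StarTwo: since $\Pre\InPl=\emptyset$, $\Marking_i(\InPl)$ never increases from $\Marking_0(\InPl)=1$, so a transition of $\Post\InPl$ — which consumes that unique token — can fire at most once, exactly what $\ltlalws(\Post\InPl\limply\lnot\ltlnext\ltlevtly\Post\InPl)$ demands. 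For $\End{\Pre\OutPl}$ from spell~\StarThree: proper completion forces $\Marking_{n-1}(\OutPl)=0$ (otherwise $\Marking_{n-1}=\{\OutPl\}$, where $\Transition_n$ could not fire), so $\Transition_n$ produced the token of $\OutPl$, i.e.\ $\Transition_n\in\Pre\OutPl$; since $\ltlalws\ltlevtly(\Pre\OutPl)$ just says the last event lies in $\Pre\OutPl$, it holds. For $\AltPrecShort{\Pre\Place}{\Post\Place}$ from spell~\StarOne, with $\Place$ internal: if $\Transition_i\in\Post\Place$ then $\Marking_{i-1}(\Place)=1$ ($\ge1$ by the firing rule, $\le1$ by safety); since $\Marking_0(\Place)=0$, following $\Marking(\Place)$ backwards the last index $j<i$ at which it changed is a production $\Transition_j\in\Pre\Place$ (a consumption there would force $\Marking_{j-1}(\Place)=2$), with no transition of $\Post\Place$ strictly between $j$ and $i$; this is precisely $\PnFireSeq,i-1\lmodel\lnot\Post\Place\ltlsince\Pre\Place$, so $\PnFireSeq,i\lmodel\Post\Place\limply\ltlyday(\lnot\Post\Place\ltlsince\Pre\Place)$. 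As all three checks range over every instant, $\PnFireSeq\lmodel\DecSpec$.

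\emph{Direction (ii).} Let $\ltrace=\langle\Transition_1,\dots,\Transition_n\rangle\in\Transitions^{*}$ with $\ltrace\lmodel\DecSpec$; necessarily $n\ge1$, since at instant~$1$ the empty trace satisfies no formula. For each prefix $\ltrace_i=\langle\Transition_1,\dots,\Transition_i\rangle$ define a \emph{bookkeeping marking} $\mu_i\colon\Places\to\{0,1\}$ by: $\mu_i(\InPl)=1$ iff no transition of $\Post\InPl$ occurs in $\ltrace_i$; $\mu_i(\OutPl)=1$ iff some transition of $\Pre\OutPl$ occurs in $\ltrace_i$; and, for internal $\Place$, $\mu_i(\Place)=1$ iff the last transition of $\Pre\Place\cup\Post\Place$ occurring in $\ltrace_i$, if any, belongs to $\Pre\Place$. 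The core claim, proved by induction on $i$, is that $\ltrace_i$ is a firing sequence of $\Wn$ with $\Fires{\Marking_0}{\ltrace_i}{\mu_i}$. The base case holds because $\mu_0=\Marking_0$. For the step: $\Transition_{i+1}$ is enabled at $\mu_i$ because $\mu_i(\Place)=1$ for every $\Place$ with $\Transition_{i+1}\in\Post\Place$ — for $\Place=\InPl$ this is forced by $\AtMo{\Post\InPl}$ (a $\Post\InPl$-transition in $\ltrace_i$ would, through its $\lnot\ltlnext\ltlevtly$ clause at that instant, preclude $\Transition_{i+1}\in\Post\InPl$), for internal $\Place$ it is forced by evaluating $\AltPrecShort{\Pre\Place}{\Post\Place}$ at instant $i+1$ (the antecedent $\Post\Place$ holds there, so $\lnot\Post\Place\ltlsince\Pre\Place$ holds at $i$, whence $\mu_i(\Place)=1$), and $\Place=\OutPl$ cannot occur since $\Post\OutPl=\emptyset$. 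A place-by-place case analysis on whether $\Transition_{i+1}$ lies in $\Pre\Place$, in $\Post\Place$, or in neither then shows that firing $\Transition_{i+1}$ at $\mu_i$ yields $\mu_{i+1}$; the one non-immediate subcase, $\Transition_{i+1}\in\Pre\Place$, needs $\mu_i(\Place)=0$, which follows from $1$-boundedness of $\Wn$ applied to the reachable marking just produced. Finally, $\End{\Pre\OutPl}$ at instant~$n$ gives $\Transition_n\in\Pre\OutPl$, so $\mu_n(\OutPl)=1$ and $\mu_n$ is final; hence $\ltrace$ is a run.

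\emph{The main obstacle} is direction (ii). Direction (i) is routine token bookkeeping driven by the firing rule and safety. In (ii), every constraint is \emph{local} — it mentions only one place's pre- and post-set — so the inductive invariant $\Fires{\Marking_0}{\ltrace_i}{\mu_i}$ must do double duty: certify that the next transition is enabled \emph{and} keep $\mu$ within $\{0,1\}$. Crucially, the latter is \emph{not} enforced by any single constraint (for instance, $\AltPrecShort{\Pre\Place}{\Post\Place}$ alone does not forbid two consecutive $\Pre\Place$-transitions) but is obtained for free from $1$-boundedness of $\Wn$ once enabledness has been established. I would also settle explicitly two minor technicalities put aside above: self-loop transitions $\Transition\in\Pre\Place\cap\Post\Place$ (either excluded by assumption or absorbed into the case analysis by treating such a $\Transition$ as a production that leaves $\mu(\Place)=1$), and the degenerate case $\InPl=\OutPl$ (and the empty trace), which is immediate.
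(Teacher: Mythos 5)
Your proof is correct, and at the top level it follows the same decomposition as the paper's: both directions are handled constraint-family by constraint-family, with $\AtMoTmp$ tied to $\InPl$, $\EndTmp$ to $\OutPl$, and $\AltPrecTmp$ to the internal places. The differences are in the execution. For direction \emph{(i)}, the paper proves the $\AltPrecTmp$ case by contradiction (a violating instant would force firing a transition of $\Post\Place$ from an unmarked $\Place$, against safety), whereas you argue directly by tracing the token count of $\Place$ backwards from the firing instant; these are interchangeable, and your treatment of $\AtMoTmp$ via monotone non-increase of the token in $\InPl$ is in fact simpler than the paper's detour through dead transitions. The substantive divergence is in direction \emph{(ii)}: the paper argues by contradiction that a satisfying non-run would contain a disabled transition whose disabledness violates some $\AltPrecTmp$ constraint, but it never explicitly constructs the witnessing firing sequence and handles the first transition of the trace rather informally. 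You instead run an explicit induction carrying a bookkeeping marking $\mu_i$ defined combinatorially from the prefix, proving simultaneously that each prefix is a firing sequence and that it reaches $\mu_i$. This buys a sharper accounting of hypotheses --- in particular it isolates the one spot where $1$-boundedness is genuinely needed in this direction, namely to keep $\mu_i$ within $\{0,1\}$ when a producing transition fires, which no single (local) constraint enforces --- at the cost of a slightly heavier invariant. Your flagged technicalities (self-loop transitions, the empty trace, and the evaluation of the $\AltPrecTmp$ antecedent at the very first instant, where the yesterday operator is vacuously false and hence directly forbids internal input places for the first transition) are genuine edge cases that the paper's proof also glosses over, and your proposed resolutions are sound.
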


\def\Instant {\ensuremath{\ell}}
\begin{proof}\vspace{-2ex}
    We prove that {\DecSpec} and {\Wn} satisfy the two conditions stated in the claim.

    \begin{asparaitem}
        \item[\textit{(i)}] Let $\PnFireSeq$
        be a run of {\Wn}. We show that $\PnFireSeq \models \FormulaOf{\DecSpec}$. As {\Wn} is a Workflow net, it has a unique input place and a unique output place. Let $\Place_\letteri$ be $\InPl$ and $\Place_\lettero$ be $\OutPl$. 
        In $\FormulaOf{\DecSpec}$, we have only one constraint for the templates $\End{\paramx}$ and $\AtMo{\paramx}$, namely $\End{\Pre{\Place_\lettero}}$ and $\AtMo{\Post{\Place_\letteri}}$.
        Let 
        $\{\Tx{\lettero_1}, \ldots \Tx{\lettero_n}\}$ be the preset of $\Place_\lettero$  (with $n \in \mathbb{N}$).
        Any run of {\Wn} must satisfy $\ltlalws\ltlevtly\left( \Tx{\lettero_1} \vee \ldots \vee \Tx{\lettero_n} \right)$, i.e., $\FormulaOf{\End{\Pre{\Place_\lettero}}}$, as one of the transitions in the preset of $\Place_\lettero$ must fire last.
        Let 
        $\{\Tx{\letteri_1}, \ldots \Tx{\letteri_k}\}$ be the postset of $\Place_\letteri$ (with $k \in \mathbb{N}$).
		No other place $\Place' \neq \Place_\letteri$ can be such that $\Tx{\letteri} \in \Post{\Place'}$ for any $\Tx{\letteri} \in \Post{\Place_\letteri}$, otherwise $\Tx{\letteri}$ would be a dead transition (thus contradicting soundness): The initial marking assigns no token to $\Place'$, and no marking except the initial one assigns a token to $\Place_\letteri$.
        As a consequence, any run of {\Wn} must satisfy 
        $\ltlalws \left( \left( \Tx{\letteri_1} \vee \ldots \vee \Tx{\letteri_k} \right) \to \neg \ltlnext\ltlevtly \left( \Tx{\letteri_1} \vee \ldots \vee \Tx{\letteri_k} \right) \right)$, i.e., $\FormulaOf{\AtMo{\Post{\Place_\letteri}}}$.
		\newline
        It remains to show
        that $\PnFireSeq \models \FormulaOf{\AltPrecShort{\Pre\Place}{\Post\Place}}$ for any arbitrary place $\Place \in \Places \setminus \{ \Place_\letteri, \Place_\lettero \}$.
        Assume by contradiction that $\PnFireSeq \not\models  \ltlalws \left( \Post\Place \limply \ltlyday( \lnot\Post\Place \ltlsince \Pre\Place ) \right) $. Then, there must be a timestep $\Instant < |\PnFireSeq|$ such that $\PnFireSeq, \Instant \models \Post\Place$, i.e., a transition in $\Post\Place$ was fired, but $\PnFireSeq, \Instant \not\models \ltlyday (\lnot \Post\Place \ltlsince \Pre\Place)$. 
        Notice that, as a transition in $\Post\Place$ was fired, it means that a transition in $\Pre\Place$ was fired at a timestep $\Instant' < \Instant$, as otherwise there would be no token assigned to $\Place$ at timestep $\Instant$. Then, for $\PnFireSeq, \Instant \not\models \ltlyday (\lnot \Post\Place \ltlsince \Pre\Place)$ to be true, it must be the case that a transition in $\Post\Place$ was fired at some timestep $\Instant''$ such that $\Instant' < \Instant'' < \Instant$, and no transition in $\Pre\Place$ has been fired between timesteps $\Instant''$ and $\Instant$. However, this, in conjunction with the fact the Workflow net is safe, implies that it would not have been possible to fire a transition in $\Post\Place$ at timestep $\Instant$: $\Place$ has no token assigned at timestep $\Instant$ as it was consumed to fire a transition in $\Post\Place$ at timestep $\Instant''$. Therefore, $\PnFireSeq \models \FormulaOf{\AltPrecShort{\Pre\Place}{\Post\Place}}$ for any arbitrary place $\Place \in \Places \setminus \{ \Place_\letteri, \Place_\lettero \}$, thus implying that $\PnFireSeq \models \FormulaOf{\DecSpec}$. 

        


        
        \item[\textit{(ii)}] We now show that if a trace $\PnFireSeq$ is such that $\PnFireSeq \models \FormulaOf{\DecSpec}$ then $\PnFireSeq$ is a run of $\Wn$. Let $\Place_\letteri$ be $\InPl$ and $\Place_\lettero$ be $\OutPl$ again. Since $\PnFireSeq \models \FormulaOf{\DecSpec}$, we have that the trace correctly ends with a transition in the preset of $\Place_\lettero$, because $\PnFireSeq \models \FormulaOf{\End{\Pre{\Place_\lettero}}}$. Also, in a run of \Wn, the transitions in $\Post{\Place_\letteri}$ can only be fired once, otherwise $\Pre{\Place_\letteri}$ would be non-empty against the definition of $\InPl$. This holds true in $\PnFireSeq$, as $\PnFireSeq \models \FormulaOf{\AtMo{\Post{\Place_\letteri}}}$. 
        Notice that, unlike all other transitions in \Wn, only those in $\Post{\Place_\letteri}$ do \emph{not} map to $x$ for $\AltPrecShort{y}{x}$ in \DecSpec by design of \cref{alg:imperative-to-declarative}. Therefore, every trace will begin with the occurrence of one of the transitions in $\Post{\Place_\letteri}$ as it happens with the runs of \Wn. 
        
        It remains to show that every transition in the trace $\PnFireSeq$ was fired in $\Wn$ following the preceding sequence of transitions in the trace. 
        Suppose by contradiction that this is not the case, i.e., that there is some transition $\Transition$ fired at a timestep $\Instant$ which could not have been fired in {\Wn} given the prefix of $\PnFireSeq$ from $1$ to $\Instant -1$. Then, this implies that at least one of the places $\Place$ such that $\Transition \in \Post\Place$ does not have a token at timestep $\Instant -1$, i.e., $\Marking_{\Instant-1}(\Place)=0$. Two conditions can entail this situation: Either no transition in $\Pre\Place$ was fired before, or a transition in $\Post\Place$ was fired since the last timestep in which a transition in $\Pre\Place$ was fired, consuming the only token assigned to $\Place$. Both cases contradict the fact that $\PnFireSeq \models \FormulaOf{\AltPrecShort{\Pre\Place}{\Post\Place}}$, thus proving that $\PnFireSeq$ is a valid sequence of transitions with respect to $\Wn$. Thus, $\PnFireSeq$ is a run of $\Wn$.
    \end{asparaitem}
    
\end{proof}
Given \cref{obs:deterministic:fsa:bisimulation:language:equivalence}, we immediately obtain the following corollary.
\begin{corollary}
    The {\Declare} specification $\DecSpec$ given in output by \cref{alg:imperative-to-declarative} is bisimilar to the input safe and sound Workflow net {\Wn}.
\end{corollary}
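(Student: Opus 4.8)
The plan is to prove the two inclusions (i) and (ii) directly, working with traces and firing sequences rather than automata; the automata-theoretic reformulation, and hence bisimilarity, then follows as a by-product via \cref{obs:deterministic:fsa:bisimulation:language:equivalence}. Throughout I would use that \cref{alg:imperative-to-declarative} emits exactly three families of constraints --- the single $\AtMo{\Post{\InPl}}$ for the input place, the single $\End{\Pre{\OutPl}}$ for the output place, and one $\AltPrecShort{\Pre\Place}{\Post\Place}$ per internal place $\Place$ --- together with the facts that safety forces at most one token per place in every reachable marking, and soundness gives the absence of dead transitions and proper completion. I would also rely on the Workflow-net structure: every transition lies on a path from $\InPl$ to $\OutPl$, so it has a non-empty preset; moreover $\InPl$ has empty preset and $\OutPl$ has empty postset.

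For (i), let $\PnFireSeq$ be a run of $\Wn$. Satisfaction of $\End{\Pre{\OutPl}}$ holds because a run ends in the marking $\{\OutPl\}$, so its last transition produces a token in $\OutPl$ and therefore lies in $\Pre{\OutPl}$. Satisfaction of $\AtMo{\Post{\InPl}}$ holds because $\InPl$ starts with one token and is never re-marked; here I would first observe, via no-dead-transitions, that no transition of $\Post{\InPl}$ occurs in the postset of another place, so such a transition is enabled only in $\Marking_0$ and fires at most once. For an internal place $\Place$, I would show $\PnFireSeq \models \ltlalws(\Post\Place \limply \ltlyday(\lnot\Post\Place \ltlsince \Pre\Place))$ by contradiction: if a transition of $\Post\Place$ fires at instant $\ell$, then $\Place$ is marked at $\ell-1$; by safety that token is unique and was produced by the most recent firing of a transition in $\Pre\Place$, and an intervening firing of a transition in $\Post\Place$ would have consumed it, contradicting enabledness at $\ell$.

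For (ii), let $\PnFireSeq \models \varphi_{\DecSpec}$; I would prove by induction on its length that $\PnFireSeq$ is a firing sequence from $\Marking_0$ whose reached marking is $\{\OutPl\}$, maintaining the invariant that the marking after each prefix equals the one the corresponding firing sequence reaches. The base case needs the first transition to lie in $\Post{\InPl}$: any transition outside $\Post{\InPl}$ sits in the postset of a place with non-empty preset, hence is the $x$-parameter of some $\AltPrecShort{y}{x}$ constraint, which forbids it from occurring first. For the inductive step I would argue by contradiction, against the $\AltPrec$ constraints, that the transition fired at instant $\ell$ is enabled in the marking reached by the prefix: otherwise some place $\Place$ of its preset is unmarked at $\ell-1$, which means either no transition in $\Pre\Place$ has fired, or a transition in $\Post\Place$ consumed $\Place$'s token after the last such firing, and either possibility violates $\AltPrecShort{\Pre\Place}{\Post\Place}$. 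Finally, $\End{\Pre{\OutPl}}$ forces the last transition to deposit a token in $\OutPl$, and proper completion makes the reached marking exactly $\{\OutPl\}$ (the enabledness argument also shows no transition could follow a marking in which $\OutPl$ is marked), so $\PnFireSeq$ is a run. The step I expect to be the main obstacle is precisely this inductive argument for (ii): one must check that the three constraint families --- and nothing else --- suffice to rule out a premature marking of $\OutPl$ and to handle transitions whose only incoming place is $\InPl$, so that the prefix/marking invariant really goes through; this is where safety, soundness, and the Workflow-net boundary conditions are all used.
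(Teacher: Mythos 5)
Your proposal is correct and follows essentially the same route as the paper: the corollary is obtained exactly as you say, by establishing the two trace-level inclusions of \cref{thm:trace-equivalence} (using safety for the $\AltPrecTmp$ constraints, no-dead-transitions for $\AtMoTmp$ on $\Post\InPl$, and proper completion for $\EndTmp$ and the final marking) and then invoking \cref{obs:deterministic:fsa:bisimulation:language:equivalence} to pass from language equivalence to bisimilarity of the reachability and specification FSAs. Your explicit induction with a prefix/marking invariant in part~(ii) is a slightly more structured rendering of the paper's contradiction argument, but the decomposition, the case analysis on the three constraint families, and the points where soundness and safety are used all coincide with the paper's proof.
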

%
%
This result has a profound implication that transcends {\Declare} and {\LTLf} but pertains to the languages recognized by safe and sound Workflow nets.
\begin{theorem}
	Languages of safe and sound Workflow nets are star-free regular expressions.
\end{theorem}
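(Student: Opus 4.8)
The plan is to read the statement off the behavioral equivalence just established, combined with classical characterizations of star-free languages. Fix a safe and sound Workflow net $\Wn = (\Places, \Transitions, \FlowRel)$ and let $\DecSpec$ be the specification produced by \cref{alg:imperative-to-declarative} on input $\Wn$. By the preceding corollary (bisimilarity of $\Wn$ and $\DecSpec$) together with \cref{obs:deterministic:fsa:bisimulation:language:equivalence}, the language accepted by the specification FSA of $\DecSpec$ equals the set of runs of $\Wn$, i.e., the language of $\Wn$ over the alphabet $\Transitions$. So it suffices to prove that the language of $\DecSpec$ is star-free, i.e., denotable by a regular expression over $\Transitions$ built from $\emptyset$, singletons, union, concatenation and complement, but \emph{without} Kleene star. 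Moreover, by \cref{def:specification-fsa} the specification FSA is the product of the constraint automata, hence the language of $\DecSpec$ is $\bigcap_{\Cns \in \CnsSet} \LanguageFunc{\FormulaOf{\Cns}}$; since star-free languages over a finite alphabet form a Boolean algebra, in particular being closed under finite intersection, it is enough to show that each constraint language $\LanguageFunc{\FormulaOf{\Cns}}$ is star-free.

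For this last step I see two routes. The quick one invokes expressive-completeness results: each $\FormulaOf{\Cns}$ is an {\LTLf} formula with past modalities (an instance of one of the three templates in \cref{tab:declare:semantics}, with the parameters interpreted as finite disjunctions of transitions), and {\LTLf} with past is expressively equivalent to first-order logic over finite words (the finite-trace form of Kamp's theorem), while the first-order definable word languages are precisely the star-free (equivalently, aperiodic) ones by the theorem of McNaughton and Papert. Hence every $\LanguageFunc{\FormulaOf{\Cns}}$ is star-free, and the conclusion follows from closure under intersection.

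The second, self-contained route is to exhibit a star-free expression for each of the three templates directly, using that the ``no occurrence of a symbol from $X$'' language $(\Transitions \setminus X)^{*}$ is itself star-free, being the complement of $\Transitions^{*}\, X\, \Transitions^{*}$ (with $\Transitions^{*} = \overline{\emptyset}$ and $X$ a finite union of singletons, hence with the middle factor truly one letter wide). Then $\End{\Pre{\OutPl}}$ denotes $\Transitions^{*}\,\Pre{\OutPl}$; $\AtMo{\Post{\InPl}}$ denotes $(\Transitions\setminus\Post{\InPl})^{*} \cup (\Transitions\setminus\Post{\InPl})^{*}\,\Post{\InPl}\,(\Transitions\setminus\Post{\InPl})^{*}$; and $\AltPrecShort{\Pre\Place}{\Post\Place}$ is the complement of the ``missing-precedence'' union $\bigl((\Transitions\setminus(\Pre\Place\cup\Post\Place))^{*}\, \Post\Place\, \Transitions^{*}\bigr) \cup \bigl(\Transitions^{*}\,(\Post\Place\setminus\Pre\Place)\,(\Transitions\setminus(\Pre\Place\cup\Post\Place))^{*}\, \Post\Place\, \Transitions^{*}\bigr)$, which captures exactly the traces where some firing of a $\Post\Place$-transition is not preceded, since the start or since the previous $\Post\Place$-firing, by a $\Pre\Place$-firing. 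All of these become Kleene-star-free once the abbreviations $(\Transitions\setminus X)^{*}$ are unfolded as above, and closure under intersection again yields the claim; this route has the bonus of making explicit that the three templates stay inside the star-free fragment.

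The substantive point to get right — rather than any deep difficulty — is the bookkeeping that makes the reduction legitimate: that ``the language of $\Wn$'' is the set of \emph{runs} (not arbitrary firing sequences) over the transition alphabet, which is precisely what \cref{def:reachability-graph} and \cref{thm:trace-equivalence} deliver, and that the empty word is treated consistently on both sides (it is not a run since $\InPl \neq \OutPl$, and it is rejected by $\FormulaOf{\End{\Pre{\OutPl}}}$). Once these are pinned down, the theorem is an immediate consequence of the corollary together with either the Kamp/McNaughton--Papert characterization or the explicit expressions above.
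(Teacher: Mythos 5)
Your first route is exactly the paper's proof: it derives the claim from the trace-equivalence theorem together with the expressive equivalence of {\LTLf} and star-free regular expressions (the paper cites De Giacomo and Vardi for this, which subsumes your Kamp/McNaughton--Papert chain). The explicit star-free expressions for the three templates in your second route are a correct but unnecessary bonus; the paper does not include them.
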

\begin{proof}\vspace{-1ex}
	The claim follows from \cref{thm:trace-equivalence}, recalling that {\Declare} patterns are expressed in {\LTLf}, which is expressively equivalent to star-free regular expressions~\cite{DeGiacomo.Vardi/IJCAI2013:LDLf}.
\end{proof}%


\noindent\textbf{Space and time complexity.} \label{sec:algo:subsec:spacetime} \Cref{alg:imperative-to-declarative} outputs a {\Declare} specification $\DecSpec = \left( \DecRepertoire, \AuAlph, \CnsSet \right)$ which contains, for each place in the input Workflow net $\Wn = \left( \Places, \Transitions, \FlowRel \right)$, a constraint with the pre- and post-sets as its actual parameters. Each transition that is in relation with a  place $\Place$ in the flow relation $F$ appears exactly once in the constraint stemming from $\Place$; therefore, the space complexity class of \cref{alg:imperative-to-declarative} is $\mathcal{O}\left( |\FlowRel| \right)$. 
As for the time complexity, we can assume that a pre-processing step is conducted to represent $F$ in the form of a sequence of pairs, associating every place to its pre-set and post-set. The cost of this operation is $\Theta(F)$ and $\mathcal{O}\left( |\Places| \times |\Transitions|\right)$. 
For each place, the algorithm performs up to three if-checks 
and then a new constraint is created in constant time, hence $\mathcal{O}(|P|)$. 
Therefore, the time complexity of the algorithm is bounded by the update of $\CnsSet$, necessitating up to $\mathcal{O}(|P|\times|T|)$ time.


Next, we experimentally validate and put the above theoretical results to the test.

\section{Implementation and evaluation}
\label{sec:evaluation}
We implemented \cref{alg:imperative-to-declarative} in the form of a proof-of-concept prototype encoded in Python. 
The tool, testbeds, and experimental results are available for public access.\footnote{\url{https://github.com/l2brb/Sp3llsWizard}\label{code:repo}} 
In the following, we report on tests conducted with our algorithm's implementation to empirically confirm its soundness, assess memory efficiency, and gauge runtime performance. 
Finally, we showcase a process diagnostic application as a downstream task for our approach. 

\subsection{Automata bisimulation}
\label{sec:automatabisim}
\begin{figure}[tb]
    \centering
    \resizebox{\textwidth}{!}{
        \begin{tikzpicture}[->,>=stealth',shorten >=1pt,node distance=2cm,
                    thick,
                    initial text = {},
                    state/.style = {
                        circle,
                        draw,
                        inner sep=1mm,
                    },]
                            
      \node[initial, state] (q0) {$\AuSt_0$};
      \node[state] (q1) [right of=q0] {$\AuSt_1$};
      \node[state] (q2) [right of=q1] {$\AuSt_2$};
      \node[state] (q3) [right of=q2] {$\AuSt_3$};
      \node[state] (q4) [right of=q3] {$\AuSt_4$};
      \node[state] (q5) [right of=q4] {$\AuSt_5$};
      \node[state] (q6) [above=0.25cm of q5] {$\AuSt_6$};
      \node[state] (q7) [right of=q5] {$\AuSt_7$};
      \node[state] (q8) [right of=q7] {$\AuSt_8$};
      \node[accepting, state] (q9) [right of=q8] {$\AuSt_9$};
      
      \path
        (q0) edge node [midway, fill=white] {$t_{\taska}$} (q1)
        
        (q1) edge node [midway, fill=white] {$t_{\taskb}$} (q2)
        
        (q2) edge [bend left=30] node [midway, fill=white] {$t_{\taskc}$} (q3)
        
        (q2) edge [bend right=30] node [midway, fill=white] {$t_{\taskd}$} (q3)
        
        (q3) edge node [midway, fill=white] {$t_{\taske}$} (q4)
        
        (q4) edge node [midway, fill=white] {$t_{\taskg}$} (q5)
        
        (q4) edge [bend left=15] node [midway, fill=white] {$t_{\taskf}$} (q6)
        
        (q6) edge [bend left=15] node [midway, fill=white] {$t_{\taskg}$} (q7)
        
        (q5) edge node [midway, fill=white] {$t_{\taskf}$} (q7)
        
        (q7) edge node [midway, fill=white] {$t_{\tasku}$} (q8)
        
        (q8) edge node [midway, fill=white] {$t_{\taskv}$} (q9)
        
        (q8) edge [bend left=15] node [midway, fill=white] {$t_{\taskw}$} (q1);
     
\end{tikzpicture}%
    }\vspace{-2ex}
    \caption[FSA equivalent to the specification]{FSA of the specification in \cref{tab:constraints}}
	\label{fig:automaton}
\end{figure}
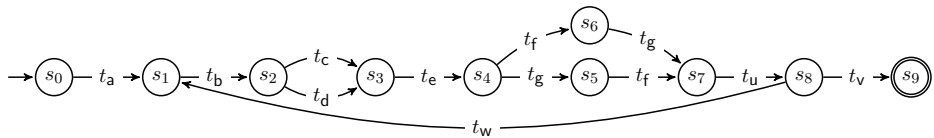
To experimentally validate the correctness of the implementation of \cref{alg:imperative-to-declarative}, we performed a preliminary comparison of the reachability FSA (\cref{def:reachability-graph}) of known Workflow nets and the specification FSA (\cref{def:specification-fsa}) consisting of the {\Declare} constraints returned by our tool.
\Cref{fig:automaton} illustrates the FSA of the specification derived from the Workflow net in \cref{fig:wnexample}, computed with a dedicated module presented in~\cite{DiCiccio.etal/IS2017:ResolvingInconsistenciesRedundanciesDeclare}. Also by visual inspection, we can conclude that the two FSAs are bisimilar, as expected. Owing to space constraints, we cannot portray the entire range of automata derived from the Workflow nets in our experiments. The interested reader can find the full collection collection (including non-free choice nets such as that of~\cite[Fig.~24]{DBLP:journals/jcsc/Aalst98}) in our public codebase.\textsuperscript{\ref{code:repo}}

\subsection{Performance analysis}
\label{sec:performance}
Here, we report on the quantitative assessment of our solution in terms of scalability given an increasing workload, and against real-world testbeds. For the former, we observe the time and space performance of our implemented prototype fed in input with Workflow nets of increasing size. We control the expansion process in two directions, so as to obtain the following separate effects:
\begin{iiilist}
	\item more constraints are generated, while each is exerted on up to three literals;
	\item the amount of generated constraints remains fixed, while the literals mapped to their parameters increase.
\end{iiilist}
For the real-world testbed, we take as input processes discovered by a well-known imperative process mining algorithm from a collection of openly available event logs.
We conducted the performance tests on an AMD Ryzen 9 8945HS CPU at 4.00 GHz with 32 GB RAM running Ubuntu 24.04.1. 
For the sake of reliability, we ran three iterations for every test configuration and averaged the outputs to derive the final result.
%
%
%
\begin{figure}[tb]
    \centering
    \begin{floatrow}
        \ffigbox{
            \centering
            \resizebox{0.2\textwidth}{!}{
                \begin{tikzpicture}[>=stealth',x=3em,y=2.5em]
\begin{scriptsize}
\tikzstyle{transition}=[rectangle,thick,draw=black!75,minimum height=2em,minimum width=2em,text width=1em,align=center,fill=LimeGreen,fill opacity=0.5]
\tikzstyle{newtransition}=[rectangle,thick,draw=black!75,minimum height=2em,minimum width=2em,text width=1em,align=center]
\tikzstyle{silenttransition}=[rectangle,thick,fill=black,minimum height=2em]
\tikzstyle{outputplace}=[double,draw,circle,minimum height=2em,fill=Salmon,fill opacity=0.7]

\node [place,fill=Salmon,fill opacity=0.7] (0) at (0, 0) [label=below:{$p_1$},tokens=1] {};

\node [outputplace] (1) at ( +2, 0)  [label=below:{$p_2$}]          {};

\node [transition,label=below:{$t_1$}] (a) at ( +1,  0) {}
edge [pre]                 (0)
edge [post]                (1);
\end{scriptsize}
\end{tikzpicture}
            }
        }{
            \caption[Base net used to apply the expansion rules]{Base net used to initiate the expansion mechanism in \cref{fig:Trules}}
            \label{fig:rsample}
        }
        \ffigbox{
            \centering
            \raisebox{-2.5em}{ 
                \resizebox{0.48\textwidth}{!}{
                    \begin{tikzpicture}[>=stealth',x=1.125cm,y=0.75cm]
\begin{scriptsize}
\tikzstyle{transition}=[rectangle,thick,draw=black!75,minimum height=2em,minimum width=2em,text width=1em,align=center,fill=LimeGreen,fill opacity=0.5]
\tikzstyle{newtransition}=[rectangle,thick,draw=black!75,minimum height=2em,minimum width=2em,text width=1em,align=center]
\tikzstyle{silenttransition}=[rectangle,thick,fill=black,minimum height=2em]
\tikzstyle{outputplace}=[double,draw,circle,minimum height=2em,fill=Salmon,fill opacity=0.7]

\node [place,fill=Salmon,fill opacity=0.7] (0) at (0, 0) [label=below:{$p_1$},tokens=1] {};


\node [place] (2) at (+2, 0) [] {};
\node [place] (3) at (+4, 0) [] {};

\node [outputplace] (1) at ( +6, 0)  [label=below:{$p_2$}]          {};

\node [transition,label=below:{$t_1$}] (a) at ( +3,  -0.5) {}
edge [pre]                 (2)
edge [post]                (3);

\node [newtransition] (b) at ( +1,  0) {}
edge [pre]                 (0)
edge [post]                (2);

\node [newtransition] (b) at ( +5,  0) {}
edge [pre]                 (3)
edge [post]                (1);

\node [newtransition] (d) at ( +3,  0.5) {}
edge [pre]                 (2)
edge [post]                (3);
\end{scriptsize}
\end{tikzpicture}
                }
            }
        }{
            \caption[Conditional expansion rule applied to increase constraints dimensionality]{Conditional expansion of the net in \cref{fig:T1a}}
            \label{fig:rconditional}
        }
    \end{floatrow}
\end{figure}%
%
\begin{figure}[tb]
	\centering
	\begin{subfigure}[t]{0.45\linewidth}
		\centering
		\resizebox{\linewidth}{!}{
			\begin{tikzpicture}[>=stealth',x=3em,y=2.5em]
\begin{scriptsize}
\tikzstyle{transition}=[rectangle,thick,draw=black!75,minimum height=2em,minimum width=2em,text width=1em,align=center,fill=LimeGreen,fill opacity=0.5]
\tikzstyle{newtransition}=[rectangle,thick,draw=black!75,minimum height=2em,minimum width=2em,text width=1em,align=center]
\tikzstyle{silenttransition}=[rectangle,thick,fill=black,minimum height=2em]
\tikzstyle{outputplace}=[double,draw,circle,minimum height=2em,fill=Salmon,fill opacity=0.7]

\node [place,fill=Salmon,fill opacity=0.7] (0) at (0, 0) [label=below:{$p_1$},tokens=1] {};
\node [outputplace] (1) at ( +6, 0)  [label=below:{$p_2$}]          {};

\node [place] (2) at (+2, 0) [] {};
\node [place] (3) at (+4, 0) [] {};

\node [transition,label=below:{$t_1$}] (a) at ( +3,  0) {}
edge [pre]                 (2)
edge [post]                (3);

\node [newtransition] (b) at ( +1,  0) {}
edge [pre]                 (0)
edge [post]                (2);

\node [newtransition] (c) at ( +5,  0) {}
edge [pre]                 (3)
edge [post]                (1);

\end{scriptsize}
\end{tikzpicture}
		}
		\caption[Sequential expansion]{Sequential expansion of the net in \cref{fig:rsample}}
		\label{fig:T1a}
	\end{subfigure}%
	\hfill
	\begin{subfigure}[t]{0.45\linewidth}
		\centering
		\resizebox{\linewidth}{!}{
			\tikzset{
    -|/.style={to path={-| (\tikztotarget)}},
    |-/.style={to path={|- (\tikztotarget)}},
}

\begin{tikzpicture}[>=stealth',x=3em,y=2.5em]
\begin{scriptsize}
\tikzstyle{transition}=[rectangle,thick,draw=black!75,minimum height=2em,minimum width=2em,text width=1em,align=center,fill=LimeGreen,fill opacity=0.5]
\tikzstyle{newtransition}=[rectangle,thick,draw=black!75,minimum height=2em,minimum width=2em,text width=1em,align=center]
\tikzstyle{silenttransition}=[rectangle,thick,fill=black,minimum height=2em]
\tikzstyle{outputplace}=[double,draw,circle,minimum height=2em,fill=Salmon,fill opacity=0.7]

\node [place,fill=Salmon,fill opacity=0.7] (0) at (0, 0) [label=below:{$p_1$},tokens=1] {};
\node [outputplace] (1) at ( +6, 0)  [label=below:{$p_2$}]          {};

\node [place] (2) at (+2, 0) [] {};
\node [place] (3) at (+4, 0) [] {};

\node [place] (4) at (+2, -1.2) [] {};
\node [place] (5) at (+4, -1.2) [] {};

\node [transition,label=below:{$t_1$}] (a) at ( +3,  0) {}
edge [pre]                 (2)
edge [post]                (3);

\node [newtransition] (b) at ( +1,  0) {}
edge [pre]                 (0)
edge [post]                (2) 
edge [post,|-]             (4);

\node [newtransition] (c) at ( +5,  0) {}
edge [pre]                 (3)
edge [post]                (1)
edge [pre,|-]              (5);

\node [newtransition] (d) at ( +3,  -1.2) {}
edge [pre]                 (4)
edge [post]                (5);
\end{scriptsize}
\end{tikzpicture}
		}
		\caption[Parallel expansion]{Parallel expansion of the net in \cref{fig:T1a}}
		\label{fig:T3a}
	\end{subfigure}
	\begin{subfigure}[t]{0.5\linewidth}
		\resizebox{0.9\linewidth}{!}{
			\tikzset{
    -|/.style={to path={-| (\tikztotarget)}},
    |-/.style={to path={|- (\tikztotarget)}},
}

\begin{tikzpicture}[>=stealth',x=3em,y=2.5em]
\begin{scriptsize}
\tikzstyle{transition}=[rectangle,thick,draw=black!75,minimum height=2em,minimum width=2em,text width=1em,align=center,fill=LimeGreen,fill opacity=0.5]
\tikzstyle{newtransition}=[rectangle,thick,draw=black!75,minimum height=2em,minimum width=2em,text width=1em,align=center]
\tikzstyle{silenttransition}=[rectangle,thick,fill=black,minimum height=2em]
\tikzstyle{outputplace}=[double,draw,circle,minimum height=2em,fill=Salmon,fill opacity=0.7]

\node [place,fill=Salmon,fill opacity=0.7] (0) at (0, 0) [label=below:{$p_1$},tokens=1] {};
\node [outputplace] (1) at ( +6, 0)  [label=below:{$p_2$}]          {};

\node [place] (2) at (+2, 0) [] {};
\node [place] (3) at (+4, 0) [] {};

\node [place] (4) at (+2, -1.2) [] {};
\node [place] (5) at (+4, -1.2) [] {};

\node [transition,label=below:{$t_1$}] (a) at ( +3,  0) {}
edge [pre]                 (2)
edge [post]                (3);

\node [newtransition] (b) at ( +1,  0) {}
edge [pre]                 (0)
edge [post]                (2) 
edge [post,|-]             (4);

\node [newtransition] (c) at ( +5,  0) {}
edge [pre]                 (3)
edge [post]                (1)
edge [pre,|-]              (5);

\node [newtransition] (d) at ( +3,  -1.2) {}
edge [pre]                 (4)
edge [post]                (5);

\node [newtransition] (e) at ( +3,  +1.2) {}
edge [pre,-|]                 (0)
edge [post,-|]                (1);
\end{scriptsize}
\end{tikzpicture}
		}
		\caption[Conditional expansion]{Conditional expansion of the net in \cref{fig:T3a}}
		\label{fig:T2a}
	\end{subfigure}%
	\hfill%
	\begin{subfigure}[t]{0.49\linewidth}
		\centering
		\resizebox{\linewidth}{!}{
			\tikzset{
    -|/.style={to path={-| (\tikztotarget)}},
    |-/.style={to path={|- (\tikztotarget)}},
}

\tikzset{
    -|/.style={to path={-| (\tikztotarget)}},
    |-/.style={to path={|- (\tikztotarget)}},
}

\begin{tikzpicture}[>=stealth',x=3em,y=2.5em]
\begin{scriptsize}
\tikzstyle{transition}=[rectangle,thick,draw=black!75,minimum height=2em,minimum width=2em,text width=1em,align=center,fill=LimeGreen,fill opacity=0.5]
\tikzstyle{newtransition}=[rectangle,thick,draw=black!75,minimum height=2em,minimum width=2em,text width=1em,align=center]
\tikzstyle{silenttransition}=[rectangle,thick,fill=black,minimum height=2em]
\tikzstyle{outputplace}=[double,draw,circle,minimum height=2em,fill=Salmon,fill opacity=0.7]

\node [place,fill=Salmon,fill opacity=0.7] (0) at (-2, 0) [label=below:{$\Place_1$},tokens=1] {};
\node [outputplace] (1) at ( +8, 0)  [label=below:{$\Place_2$}]          {};

\node [place,fill=SkyBlue,fill opacity=0.4] (2) at (+2, 0) [label=above:{$\Place_1'$}] {};
\node [place,fill=SkyBlue,fill opacity=0.4] (3) at (+4, 0) [label=above:{$\Place_2'$}] {};

\node [place] (4) at (+2, -1.2) [] {};
\node [place] (5) at (+4, -1.2) [] {};

\node [place] (6) at (0, 0)     [] {};
\node [place] (7) at (+6, 0)     [] {};

\node [transition,label=below:{$t_1$}] (a) at ( +3,  0) {}
edge [pre]                 (2)
edge [post]                (3);

\node [newtransition] (b) at ( +1,  0) {}
edge [pre]                 (6)
edge [post]                (2) 
edge [post,|-]             (4);

\node [newtransition] (c) at ( +5,  0) {}
edge [pre]                 (3)
edge [post]                (7)
edge [pre,|-]              (5);

\node [newtransition] (d) at ( +3,  -1.2) {}
edge [pre]                 (4)
edge [post]                (5);

\node [newtransition] (e) at ( +3,  +1.2) {}
edge [pre,-|]                 (0)
edge [post,-|]                (1);

\node [newtransition] (e) at ( -1,  0) {}
edge [pre]                 (0)
edge [post]                (6);

\node [newtransition] (e) at ( +7,  0) {}
edge [pre]                 (7)
edge [post]                (1);

\node [newtransition] (e) at ( +3,  -2.4) {}
edge [pre,-|]                 (7)
edge [post,-|]                (6);
\end{scriptsize}
\end{tikzpicture}
		}
		\caption[Loop expansion]{Loop expansion of the net in \cref{fig:T2a}}
		\label{fig:T4a}
	\end{subfigure}
	\caption[Transformation rules]{Transformation rules used to iteratively expand a safe and sound Workflow net.}
	\label{fig:Trules}
\end{figure}
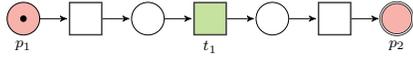
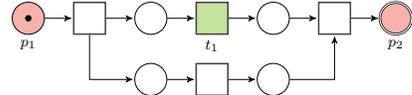
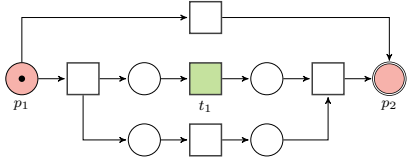
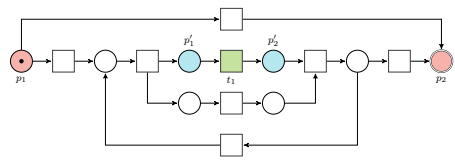

\begin{figure}[tb]
	\centering
	\begin{subfigure}[t]{0.48\linewidth}
		\centering        \includegraphics[width=\linewidth]{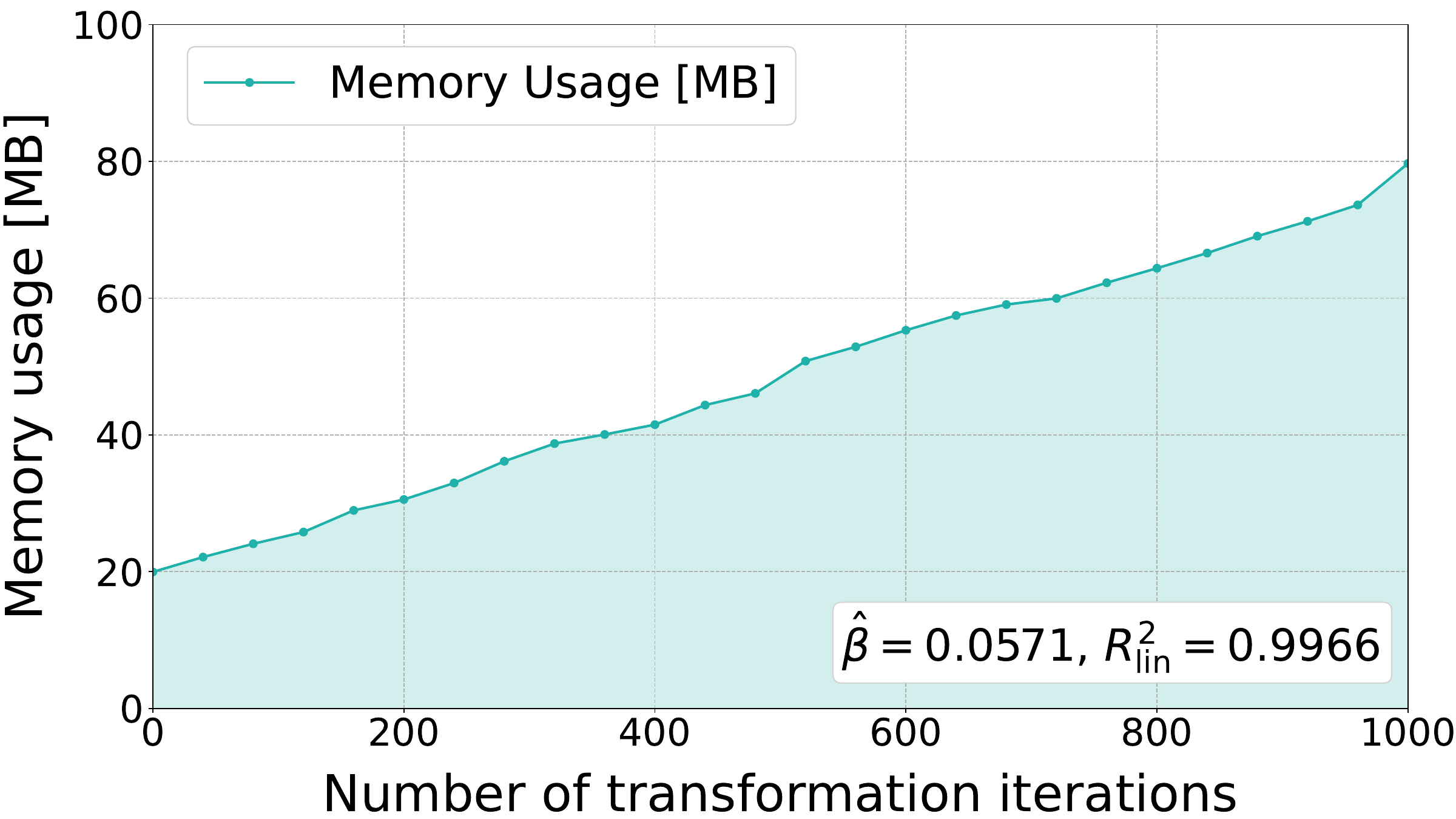}
		\caption[Memory usage plot for an incremental number of constraints]{Memory usage}
		\label{fig:memoryusage-n}
	\end{subfigure}%
	\begin{subfigure}[t]{0.48\linewidth}
		\centering
		\includegraphics[width=\linewidth]{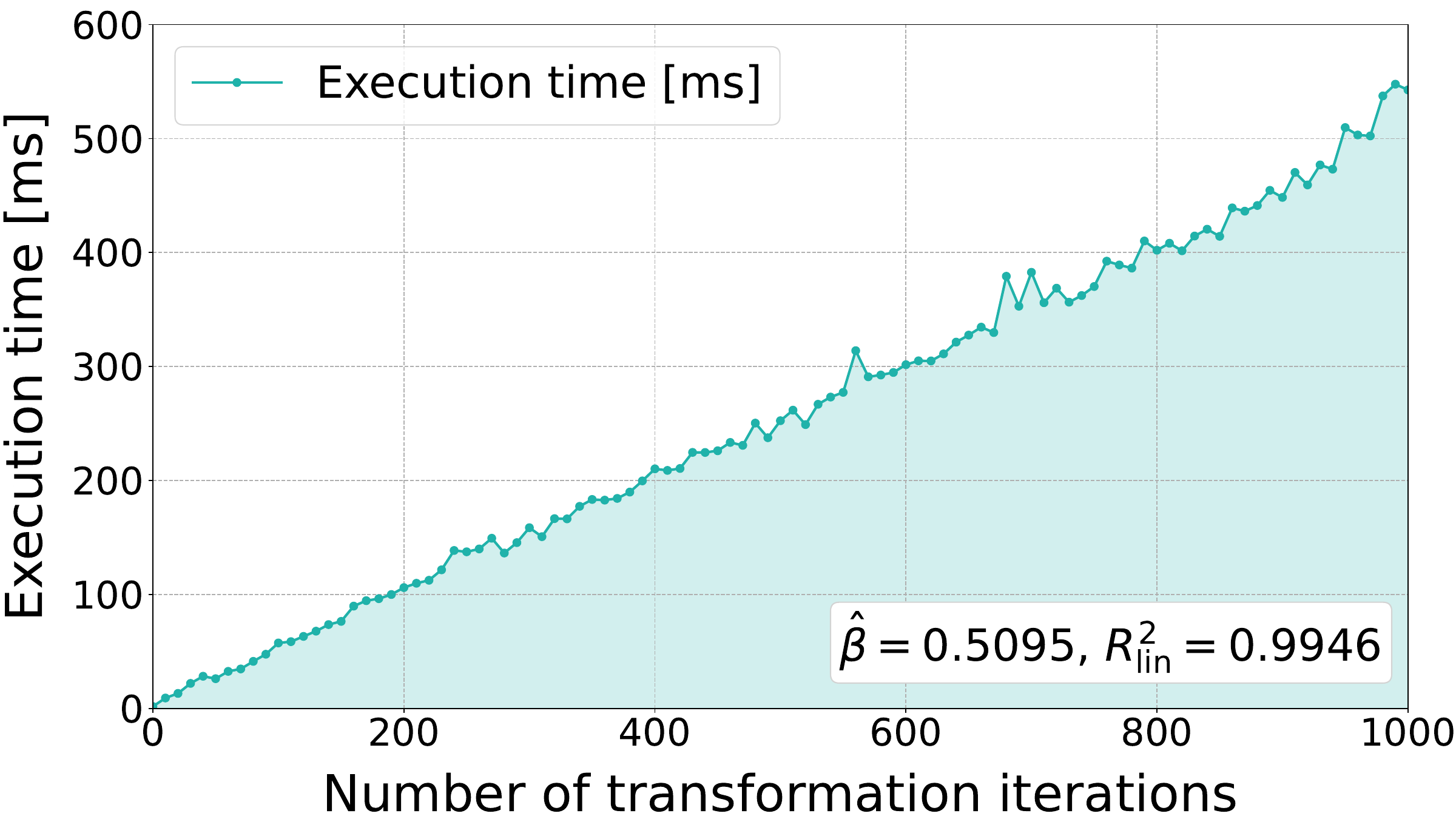}
		\caption[Execution time plot for an incremental number of constraints]{Execution time}
		\label{fig:executiontime-n}
	\end{subfigure}%
	\caption[Test results for incremental number of constraints]{Test results for the incremental number of constraints setup}
	\label{fig:nconstraints}
\end{figure}
\noindent\textbf{Increasing constraint-set cardinality.} \label{sec:evaluation:subsec:nconsraints}To examine the effectiveness of the \cref{alg:imperative-to-declarative} in handling an incremental number of constraints, we examine memory utilization and execution time through the progressive rise in the complexity of the input Workflow net. Our evaluation method relies on an expansion mechanism that iteratively applies a structured pattern of four soundness-preserving transformation rules from~\cite{Aalst/ICATPN97:VerificationofWfNs} to progressively increase the number of nodes and their configuration. This leads to a gradual increase in the number of constraints our algorithm needs to initiate. 
Starting from the Workflow net in \cref{fig:rsample}, we designate transition $t_1$ as a fixed ‘pivot’, retaining the initial and final places ($\Place_1$ and $\Place_2$), and iteratively apply the expansion mechanism illustrated in \cref{fig:Trules}.
We apply known workflow patterns
in the following order: (\cref{fig:T1a}) We add a transition before and after $\Transition_1$; (\cref{fig:T3a}) We introduce a parallel execution path; (\cref{fig:T2a}) We insert an exclusive branch; (\cref{fig:T4a}) Finally, we incorporate a loop structure. Upon completion of the expansion process, we execute the algorithm, record the results, and initiate a new iteration, maintaining $\Transition_1$ unchanged while reassigning $\Place_1$ and $\Place_2$ with the places that have $\Transition_1$ in the preset and postset (see the places colored in blue and labeled with $\Place_1'$ and $\Place_2'$ in \cref{fig:T4a}). We reiterated the procedure \num{1000} times.
\\
\indent \Cref{fig:nconstraints}
displays the registered memory usage and execution time.
To interpret the performance trends, we employ two well-established measures
: the coefficient of determination $R^2_\textrm{lin}$, which assesses the goodness-of-fit of the data to a linear trend,
and the $\hat{\beta}$ rate, which serves as a meter for the line's slope.
%
%
As depicted in \cref{fig:memoryusage-n}, memory consumption increases linearly with the number of iterations of the expansion mechanism confirmed by $R^2_\textrm{lin} = 0.9966$ and a low slope increase ($\hat{\beta} =0.0571$). \Cref{fig:executiontime-n} displays the execution time plot, with $R^2_\textrm{lin} = 0.9946$ and $\hat{\beta} = 0.5095$, thus indicating a linear trend with a slope exhibiting a moderate incline. 
We remark that these results are in line with the theoretical analysis of the space and time complexity in \cref{sec:algo:subsec:spacetime}. 
\begin{figure}[tb]
	\centering
	\begin{subfigure}[t]{0.48\linewidth}
		\centering
		\includegraphics[width=\linewidth]{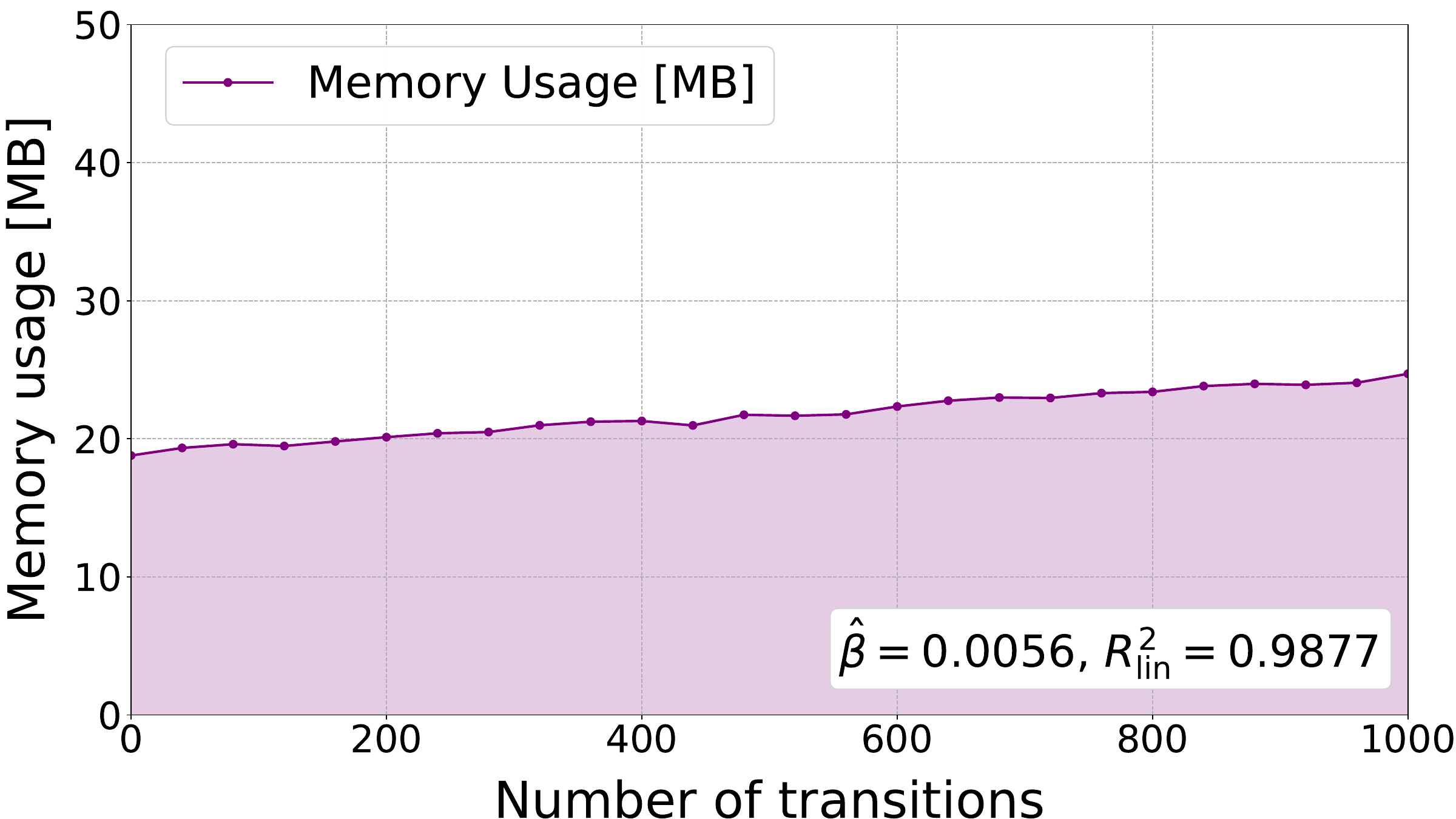}
		\caption[Memory usage for incremental constraints dimension]{Memory usage}
		\label{fig:memoryusage-d}
	\end{subfigure}%
	\begin{subfigure}[t]{0.48\linewidth}
		\centering
		\includegraphics[width=\linewidth]{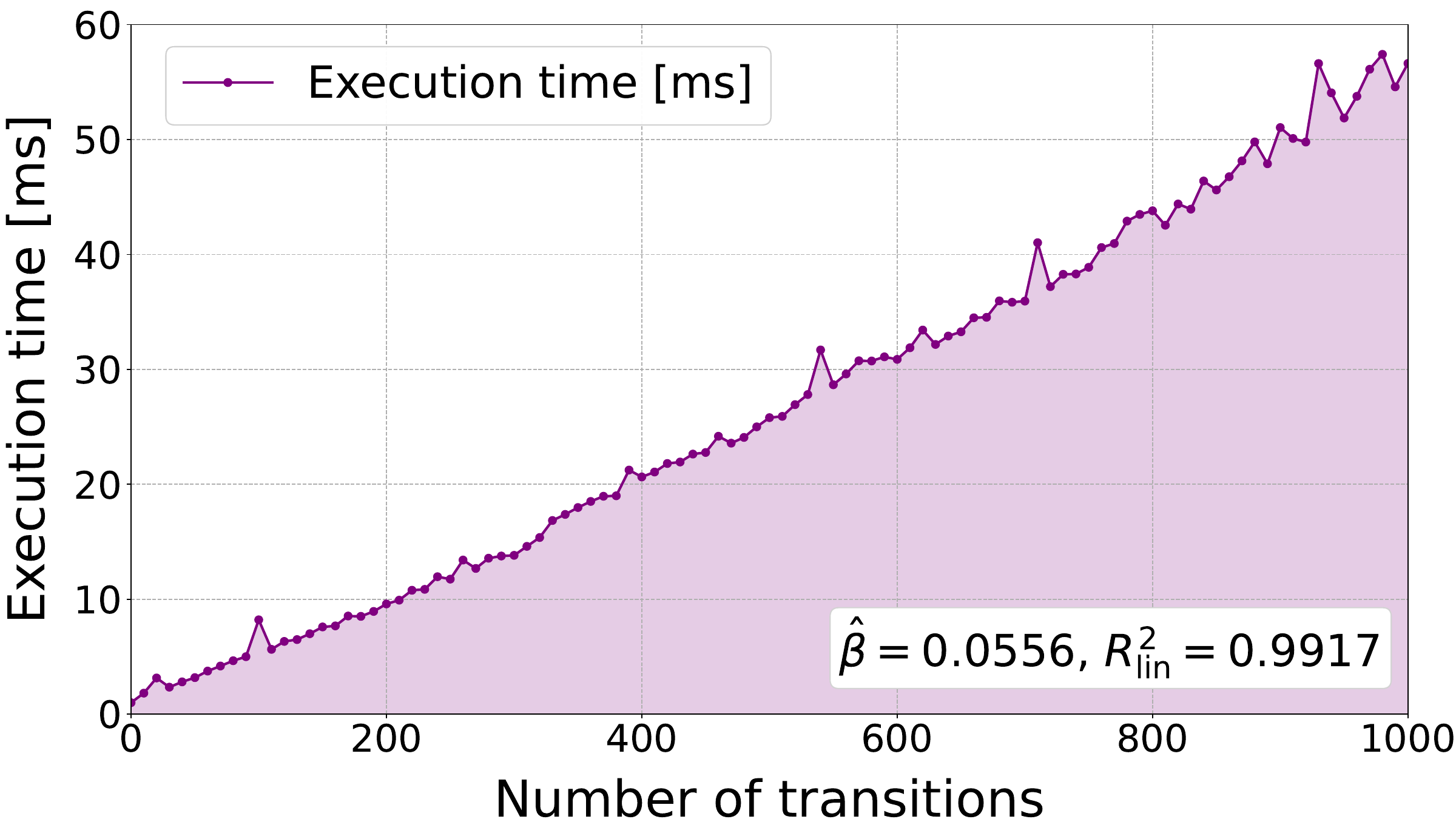}
		\caption[Execution time for incremental constraints dimension]{Execution time}
		\label{fig:executiontime-d}
	\end{subfigure}%
	\caption[Incremental constraints dimension tests]{Test results for the incremental constraints dimension setup}
	\label{fig:dconstraints}
\end{figure}
\\
\noindent\textbf{Increasing constraint formula size.} 
\label{sec:evaluation:subsec:dconsraints}
Here, we configure the test on memory usage and execution time to investigate the algorithm's performance while handling an expanding constraints' formula size (i.e., with an increasing number of disjuncts). To this end, we progressively broaden the Workflow net by applying the soundness-preserving conditional expansion rule from~\cite{Aalst/ICATPN97:VerificationofWfNs} depicted in \cref{fig:rconditional} to transition $\Transition_1$ in the net of \cref{fig:T1a}. We reiterate the process \num{1000} times. \Cref{fig:dconstraints} displays the results we registered.
Observing \cref{fig:memoryusage-d},
we can assert that the memory utilization increases linearly 
($R^2_\textrm{lin}=\num{0.9877}$) with a minimal rate ($\hat{\beta}=\num{0.0056}$). The execution time plotted in \cref{fig:executiontime-d} also exhibits a linear increase ($R^2_\textrm{lin} = 0.9917$), with a moderate slope inclination ($\hat{\beta} = 0.0556$). Once more, the results align with the theoretical complexity analysis in~\cref{sec:algo:subsec:spacetime}.
%

\noindent\textbf{Real-world process model testing.} \label{sec:evaluation:subsec:realeorld}
To evaluate the performance
of our algorithm in application on real process models, we conduct the same memory usage and execution time tests employing Workflow nets directly derived from a collection of real-life event logs available at \textit{4TU.ResearchData}.\footnote{The event logs used in our experiments are publicly available at \url{https://data.4tu.nl/}}
\begin{table}[tb]
    \centering
    \floatsetup[table]{capposition=top}
    \renewcommand{\arraystretch}{1.5}
    \caption{Performance comparison with real-world process models}
    \label{tab:mem_time}
    \resizebox{0.8\textwidth}{!}{%
        \begin{tabular}{%
                l 
                S[table-format=3.0] 
                S[table-format=3.0] 
                S[table-format=3.0] @{\hspace{1ex}} 
                S[table-format=2.2] 
                S[table-format=2.2]
            }
            \toprule
            \textbf{Event log} & 
            \textrm{Trans.} & 
            \textrm{Places} & 
            \textrm{Nodes} & 
            \textbf{Mem.usage [MB]} & 
            \textbf{Exec.time [ms]} \\
            \cmidrule(r){1-1}\cmidrule(r){2-4}\cmidrule(r){5-6}
            \href{https://doi.org/10.4121/UUID:3926DB30-F712-4394-AEBC-75976070E91F}{BPIC~12}   & 78 & 54 & 174 & 19.97 & 5.11 \\
            \href{https://doi.org/10.4121/UUID:C2C3B154-AB26-4B31-A0E8-8F2350DDAC11}{BPIC~13\textsubscript{cp}}  & 19 & 54 & 44 & 19.76 & 1.70 \\
            \href{https://doi.org/10.4121/UUID:500573E6-ACCC-4B0C-9576-AA5468B10CEE}{BPIC~13\textsubscript{inc}} & 23 & 17 & 50 & 19.89 & 2.03 \\
            \href{https://doi.org/10.4121/UUID:3CFA2260-F5C5-44BE-AFE1-B70D35288D6D}{BPIC~14\textsubscript{f}}  & 46 & 35 & 102 & 19.90 & 3.31 \\
            \href{https://doi.org/10.4121/UUID:A0ADDFDA-2044-4541-A450-FDCC9FE16D17}{BPIC~15\textsubscript{1f}}  & 135 & 89 & 286 & 20.44 & 8.39 \\
            \href{https://doi.org/10.4121/UUID:63A8435A-077D-4ECE-97CD-2C76D394D99C}{BPIC~15\textsubscript{2f}}   & 200 & 123 & 422 & 20.91 & 12.30 \\
            \href{https://doi.org/10.4121/uuid:ed445cdd-27d5-4d77-a1f7-59fe7360cfbe}{BPIC~15\textsubscript{3f}}  & 178 & 122 & 396 & 20.77 & 11.49 \\
            \href{https://doi.org/10.4121/uuid:679b11cf-47cd-459e-a6de-9ca614e25985}{BPIC~15\textsubscript{4f}}  & 168 & 115 & 368 & 20.55 & 11.38 \\
            \href{https://doi.org/10.4121/uuid:b32c6fe5-f212-4286-9774-58dd53511cf8}{BPIC~15\textsubscript{5f}}  & 150 & 99 & 320 & 20.43 & 9.16 \\
            \href{https://doi.org/10.4121/UUID:5F3067DF-F10B-45DA-B98B-86AE4C7A310B}{BPIC~17} & 87 & 55 & 184 & 19.91 & 5.67 \\
            \href{https://doi.org/10.4121/UUID:270FD440-1057-4FB9-89A9-B699B47990F5}{RTFMP} & 34 & 29 & 82 & 19.81 & 3.47 \\
            \href{https://doi.org/10.4121/UUID:915D2BFB-7E84-49AD-A286-DC35F063A460}{Sepsis} & 50 & 39 & 116 & 19.75 & 3.65 \\
            \bottomrule
        \end{tabular}
    }
\end{table}
To this end, we employ the Inductive Miner algorithm version proposed in in~\cite{DBLP:conf/bpm/LeemansFA13}, which filters out infrequent behavior while still discovering well-structured, sound models~\cite{DBLP:journals/tkde/AugustoCDRMMMS19}. Thus, we first run the Inductive Miner on the event logs considered in~\cite{DBLP:journals/tkde/AugustoCDRMMMS19} to generate the Workflow nets. We then apply \cref{alg:imperative-to-declarative} to derive the corresponding {\Declare} specification.
We report the aggregate test result in \cref{tab:mem_time}, detailing the memory usage, the execution time, and all the features of the mined Workflow nets. We find that the overall differences in resource usage are negligible. These real-world test outcomes again follow the complexity assumptions outlined in \cref{sec:algo:subsec:spacetime}. 

\subsection{A downstream task: Using constraints as determinants for process diagnosis}
\label{sec:conformance}
\Cref{alg:imperative-to-declarative} enables the transition from an overarching imperative model to a constraint-based specification, enclosing parts of behavior into separate constraints. Herewith, we aim to demonstrate how we can single out the violated rules constituting the process model behavior, thereby spotlighting points of non-compliance with processes. In other words, we aim to use constraints as determinants for a process diagnosis.
%
\begin{figure}[tb]
    \centering
    \includegraphics[width=0.6\textwidth]{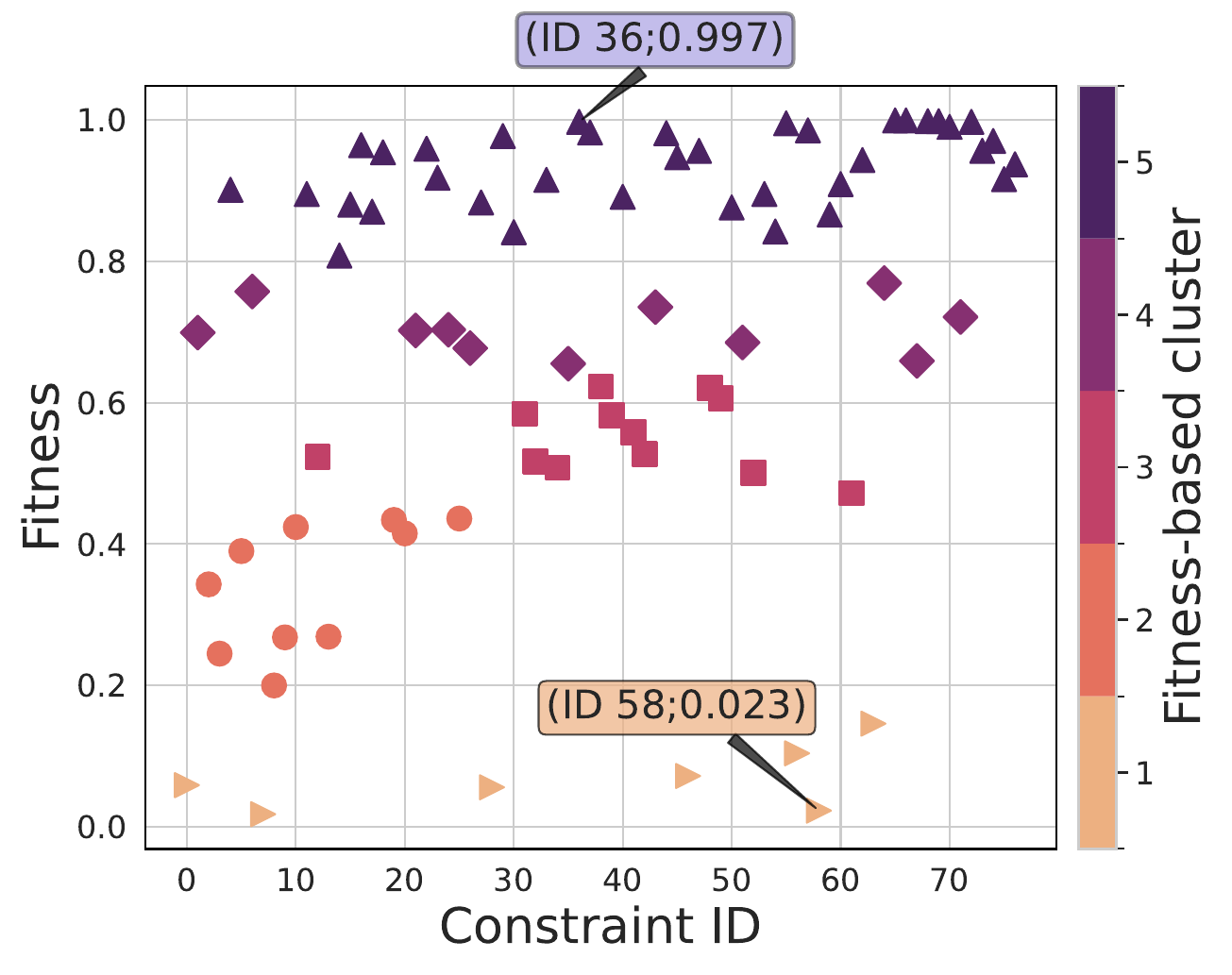}
    \caption[Scatter plot for constraints clustered by fitness values]{Fitness-based clusters of the constraints in the descriptive model of \href{https://doi.org/10.4121/uuid:b32c6fe5-f212-4286-9774-58dd53511cf8}{BPIC~15\textsubscript{5f}}}
    \label{fig:scatter}
\end{figure}
%
For this purpose, we created a dedicated module extending a declarative specification miner for constraint checking via the replay of runs on semi-symbolic automata like those in \cref{fig:constraint:automata:atmoone:inst,fig:constraint:automata:altprec:inst,fig:constraint:automata:end:inst}, following~\cite{DiCiccio.etal/IS2018:RelevanceofBusinessConstrainttoEventLog}. 
Without loss of generality, we build the runs from data pertaining to building permit applications in Dutch municipalities from BPIC~15\textsubscript{5f}~\cite{bpic155} 
and apply the preprocessing technique mentioned in~\cite{DBLP:journals/tkde/AugustoCDRMMMS19}, resulting in \num{975} traces. We then process the log with the $\alpha$-algorithm~\cite{DBLP:journals/tkde/AalstWM04} and provide the returned net as input to our implementation of \cref{alg:imperative-to-declarative}. 

We observe that the specification consists of \num{129} constraints. Of those, our tool detected violations by at least a trace for \num{77} of those. \Cref{fig:scatter} illustrates the percentage of satisfying traces (henceforth, \emph{fitness} for brevity) of the \num{77} constraints, which we clustered into five distinct groups to ease inspection. 
We first focus on violated constraints exhibiting high fitness (the blue upward triangles at the top of \cref{fig:scatter}). 
Let us take, e.g., the constraint identified by ID 36 in the figure: \AltPrecShort{\{\Tx{\Task{01\_HOOFD\_490\_1}}, \Tx{\Task{13\_CRD\_010}}\}}{\Tx{\Task{01\_HOOFD\_490\_1a}}}, which exhibits a fitness of \num{0.997}. This constraint imposes that when ``\textit{Set Decision Status}'' ({\Tx{\Task{01\_HOOFD\_490\_1a}}}) occurs, it should be preceded by either ``\textit{Create Environmental Permit Decision}'' (\Tx{\Task{01\_HOOFD\_490\_1}}) or ``\textit{Coordination of Application}'' (\Tx{\Task{13\_CRD\_010}}).
In the three traces violating the constraint (11369696, 9613229, 12135936), though, ``\textit{Set Decision Status}'' is preceded by neither of the two. By further inspection, we observe ``\textit{No Permit Needed or Only Notification Needed}'' ({\Tx{\Task{14\_VRIJ\_010}}}) in the trace prefix instead, suggesting that the process bypasses the standard decision-making steps defined by the reference model in favor of an alternative where a permit decision is unnecessary. 
%
%
On the other side of the spectrum, let us look at constraints with low trace fitness values (depicted by rightward orange triangles in \cref{fig:scatter}). 
These constraints likely suffer from systematic defects rather than spurious alterations in the process behavior. Constraint ID 58, e.g., belongs to this group:  \AltPrecShort{\Tx{\Task{1\_HOOFD\_510\_2}}}{\{\Tx{\Task{01\_HOOFD\_510\_3}},\Tx{\Task{01\_HOOFD\_520}},\Tx{\Task{END}}\}} (depicted in the lower section of \cref{fig:scatter}). 
Other constraints in the same group
have in common the presence of \Tx{\Task{END}} in the activator's set. 
An explanation is that the BPIC~15 log allows a multitude of possible conclusions. 
The $\alpha$-algorithm, though, disregards the occurrence frequency of individual transitions during model construction, resulting in a non-selective inclusion of all events. Consequently, this affects the fitness of those constraints. Our tool specifically pinpoints and isolates the effect of this tendency from the remainder of the net. 
Thoroughly assessing the suitability 
of our approach for process diagnostics transcends the scope of this paper but paves the path for future work.

\section{Related work}
\label{sec:related}
The relationship between imperative and declarative modeling approaches has been extensively explored in the existing literature, with a prevailing focus directed toward the development of analytics tools that effectively compare and integrate the strengths of both paradigms. 
Building on previous contributions aimed at establishing a formal connection between these paradigms~\cite{DBLP:conf/simpda/PrescherCM14,DBLP:conf/apn/CosmaHS23}, our research focuses on providing a systematic approach for translating safe and sound Workflow nets into their declarative counterparts. 
A growing research stream configures this transformation aiming to leverage the support provided by the declarative specifications for conformance checking and anomaly detection. Notably, integrating declarative constraints into event log analysis facilitates more comprehensive diagnostics than those provided by trace replaying techniques. 
%
%
In this regard, Rocha~et~al.~\cite{DBLP:conf/bpm/RochaZA24} propose an automated method for generating conformance diagnostics using declarative constraints derived from an input imperative model. Their method relies on a library of templates internally maintained in the tool. Eligible constraints are generated by verifying the instantiation of those templates against the model's state space. The ones that are behaviorally compatible are then subject to redundancy removal pruning. 
Finally, the retained constraints are checked for conformance against log traces.
In \cite{DBLP:conf/bpm/BergmannRK23}, the authors present a tool that derives a set of eligible constraints directly extracting relations based on a selection of BPMN models' activity patterns. 
The work of Rebmann~et~al.~\cite{DBLP:journals/corr/abs-2407-02336} proposes a framework for extracting best-practice declarative constraints from a collection of imperative models aiming to discover potential violations and undesired behavior. Constraints are extracted 
akin to~\cite{DBLP:conf/bpm/BergmannRK23}, then refined and validated via natural-language-processing techniques to measure their relevance for a given event log. Busch~et~al.~\cite{DBLP:conf/bpm/BuschKL24} 
adopt a similar technique to check constraints characterizing process model repositories against event logs.
%
All these techniques share our aim to derive declarative constraints from imperative models given as input. However, they do so via simulation or state space exploration, with limited guarantees of behavioral equivalence. 
In contrast, our work proposes an algorithm that is proven to establish a formal equivalence between the given imperative model and the derived declarative specification. Being based on the sole exploration of the net's structure, it is also lightweight in terms of computational demands. 

\section{Conclusion and future work}
\label{sec:conclusion}
In this paper, we presented a systematic approach to translate safe and sound Workflow nets into bisimilar {\Declare} specifications. The latter are based solely on three {\LTLf} formula templates from the {\Declare} repertoire with branching: \AtMoTmp, \EndTmp, and \AltPrecTmp. We provide a proof-of-concept implementation, of which we evaluate scalability and showcase applications against synthetic and real-world testbeds.

We believe that the scope of this research may be expanded in a number of directions. A natural extension of our work is the inclusion of label-mappings of the Workflow net in the declarative specifications, which would turn the constraints' semi-symbolic automata into transducers that are advantageous in conformance checking contexts.
Moreover, we seek to broaden the application of our solution to detect behavioral violations, extending support to a wider range of imperative input models. Also, we aim to investigate the correspondence between specification inconsistencies and Workflow net unsafeness and/or unsoundness. Another promising application lies in hybrid representations combining imperative and declarative paradigms. In this regard, our approach could facilitate behavioral comparisons akin to~\cite{DBLP:conf/sose/Baumann18} and enable the construction of hybrid representations tailored to diverse scenarios~\cite{DBLP:journals/bise/SmedtWVP16}. 
%
%
%

\bibliographystyle{splncs04}
\bibliography{bibliography}

\begin{thebibliography}{10}
\providecommand{\url}[1]{\texttt{#1}}
\providecommand{\urlprefix}{URL }
\providecommand{\doi}[1]{https://doi.org/#1}

\bibitem{Aalst/1996:StructuralCharacterizationsSoundWfNs}
van~der Aalst, W.M.P.: Structural characterizations of sound workflow nets. Tech. Rep.~9623, Technische Universiteit Eindhoven (1996)

\bibitem{Aalst/ICATPN97:VerificationofWfNs}
van~der Aalst, W.M.P.: Verification of workflow nets. In: ICATPN. pp. 407--426 (1997)

\bibitem{DBLP:journals/jcsc/Aalst98}
van~der Aalst, W.M.P.: The application of petri nets to workflow management. J. Circuits Syst. Comput.  \textbf{8}(1),  21--66 (1998)

\bibitem{DBLP:journals/tkde/AalstWM04}
van~der Aalst, W.M.P., Weijters, T., Maruster, L.: Workflow mining: Discovering process models from event logs. {IEEE} Trans. Knowl. Data Eng.  \textbf{16}(9),  1128--1142 (2004)

\bibitem{DBLP:journals/tkde/AugustoCDRMMMS19}
Augusto, A., Conforti, R., Dumas, M., et~al.: Automated discovery of process models from event logs: Review and benchmark. {IEEE} Trans. Knowl. Data Eng.  \textbf{31}(4),  686--705 (2019)

\bibitem{DBLP:conf/sose/Baumann18}
Baumann, M.: Comparing imperative and declarative process models with flow dependencies. In: {IEEE}{SOSE} 2018. pp. 63--68. {IEEE} Computer Society (2018)

\bibitem{DBLP:conf/bpm/BergmannRK23}
Bergmann, A., Rebmann, A., Kampik, T.: {BPMN2C}onstraints: Breaking down {BPMN} diagrams into declarative process query constraints. In: {BPM} Demos. vol.~3469, pp. 137--141 (2023)

\bibitem{DBLP:conf/bpm/BuschKL24}
Busch, K., Kampik, T., Leopold, H.: {xSemAD}: Explainable semantic anomaly detection in event logs using sequence-to-sequence models. In: {BPM}. vol. 14940, pp. 309--327 (2024)

\bibitem{Cecconi.etal/BPM2018:Janus}
Cecconi, A., Di~Ciccio, C., De~Giacomo, G., Mendling, J.: Interestingness of traces in declarative process mining: {T}he {J}anus {LTL}pf approach. In: BPM. pp. 121--138 (Sep 2018)

\bibitem{DBLP:conf/apn/CosmaHS23}
Cosma, V.P., Hildebrandt, T.T., Slaats, T.: Transforming dynamic condition response graphs to safe petri nets. In: {PETRI} {NETS} 2023. Lecture Notes in Computer Science, vol. 13929, pp. 417--439. Springer (2023)

\bibitem{DeGiacomo.etal/AAAI2014:ReasoningLTLFinite}
De~Giacomo, G., De~Masellis, R., Montali, M.: Reasoning on {LTL} on finite traces: Insensitivity to infiniteness. In: AAAI. pp. 1027--1033 (2014)

\bibitem{DeGiacomo.Vardi/IJCAI2013:LDLf}
De~Giacomo, G., Vardi, M.Y.: Linear temporal logic and linear dynamic logic on finite traces. In: IJCAI. pp. 854--860 (2013)

\bibitem{DBLP:journals/bise/SmedtWVP16}
{De Smedt}, J., {De Weerdt}, J., Vanthienen, J., et~al.: Mixed-paradigm process modeling with intertwined state spaces. Bus. Inf. Syst. Eng.  \textbf{58}(1),  19--29 (2016)

\bibitem{DeselReisig/ACPN1998:PlaceTransitionPetriNets}
Desel, J., Reisig, W.: Place/transition Petri Nets, pp. 122--173 (1998)

\bibitem{DiCiccio.etal/IS2016:EfficientDiscoveryTarget}
Di~Ciccio, C., Maggi, F.M., Mendling, J.: Efficient discovery of {T}arget-{B}ranched {D}eclare constraints. Information Systems  \textbf{56},  258--283 (Mar 2016)

\bibitem{DiCiccio.etal/IS2017:ResolvingInconsistenciesRedundanciesDeclare}
{Di Ciccio}, C., Maggi, F.M., Montali, M., et~al.: Resolving inconsistencies and redundancies in declarative process models. Information Systems  \textbf{64},  425--446 (Mar 2017)

\bibitem{DiCiccio.etal/IS2018:RelevanceofBusinessConstrainttoEventLog}
{Di Ciccio}, C., Maggi, F.M., Montali, M., et~al.: On the relevance of a business constraint to an event log. Information Systems  \textbf{78},  144--161 (Nov 2018)

\bibitem{DiCiccio.Montali/PMH2022:DeclarativeProcessMining}
Di~Ciccio, C., Montali, M.: Declarative Process Specifications: Reasoning, Discovery, Monitoring, pp. 108--152. Springer (Jun 2022), open access

\bibitem{bpic155}
van Dongen, B.: Bpi challenge 2015 municipality 5 (2015)

\bibitem{DBLP:books/sp/DumasRMR18}
Dumas, M., Rosa, M.L., Mendling, J., et~al.: Fundamentals of Business Process Management, Second Edition (2018)

\bibitem{DBLP:journals/corr/abs-1110-4161}
Hildebrandt, T.T., Mukkamala, R.R.: Declarative event-based workflow as distributed {D}ynamic {C}ondition {R}esponse graphs. In: {PLACES} 2010. vol.~69, pp. 59--73 (2010)

\bibitem{Hopcroft.etal/2006:IntroductiontoAutomataTheoryLanguagesandComputation}
Hopcroft, J.E., Motwani, R., Ullman, J.D.: Introduction to Automata Theory, Languages, and Computation (3rd Edition). Addison-Wesley Longman Publishing Co., Inc., Boston, MA, USA (2006)

\bibitem{DBLP:conf/bpm/LeemansFA13}
Leemans, S.J.J., Fahland, D., van~der Aalst, W.M.P.: Discovering block-structured process models from event logs containing infrequent behaviour. In: {BPM} Workshops 2013. vol.~171, pp. 66--78 (2013)

\bibitem{Lichtenstein.etal/LogicsofPrograms1985:TheGloryofthePast}
Lichtenstein, O., Pnueli, A., Zuck, L.D.: The glory of the past. In: Logics of Programs. pp. 196--218 (1985)

\bibitem{Ouyang.etal/TOSEM2009:BusinessProcessModels2ProcessOrientedSwSystems}
Ouyang, C., Dumas, M., van~der Aalst, W.M.P., et~al.: From business process models to process-oriented software systems. {ACM} Trans. Softw. Eng. Methodol.  \textbf{19}(1),  2:1--2:37 (2009)

\bibitem{Pesic/2008:ConstraintbasedWorkflow}
Pesic, M.: Constraint-based Workflow Management Systems: Shifting Control to Users. Ph.D. thesis, {TU} Eindhoven (10 2008)

\bibitem{DBLP:conf/bpm/PichlerWZPMR11}
Pichler, P., Weber, B., Zugal, S., et~al.: Imperative versus declarative process modeling languages: An empirical investigation. In: {BPM} Workshops 2011. vol.~99, pp. 383--394 (2011)

\bibitem{Pnueli/FOCS1977:LTL}
Pnueli, A.: The temporal logic of programs. In: FOCS. pp. 46--57 (1977)

\bibitem{DBLP:conf/simpda/PrescherCM14}
Prescher, J., {Di Ciccio}, C., Mendling, J.: From declarative processes to imperative models. In: {SIMPDA} 2014. vol.~1293, pp. 162--173 (2014)

\bibitem{DBLP:journals/corr/abs-2407-02336}
Rebmann, A., Kampik, T., Corea, C., van~der Aa, H.: Mining constraints from reference process models for detecting best-practice violations in event log. CoRR  \textbf{abs/2407.02336} (2024)

\bibitem{DBLP:conf/bpm/RochaZA24}
Rocha, E.G., van Zelst, S.J., van~der Aalst, W.M.P.: Mining behavioral patterns for conformance diagnostics. In: {BPM} 2024. vol. 14940, pp. 291--308 (2024)

\end{thebibliography}

\end{document}